\newcommand{\R}{\mathbb{R}}
\newcommand{\vol}{\operatorname{vol}}
\newcommand{\one}{\mathbb{1}}
\newtheorem{definition}{Definition}
\newtheorem{lemma}{Lemma}
\newtheorem{proposition}{Proposition}
\newtheorem{theorem}{Theorem}
\newtheorem{corollary}{Corollary}
\def\sign{\mathrm{sign}}
\DeclarePairedDelimiter{\norm}{\lVert}{\rVert} 
\DeclarePairedDelimiter{\abs}{\lvert}{\rvert} 
\DeclareMathOperator*{\argmax}{Argmax} 
\def\FASTATVO{\texttt{FAST-ATVO}}
\def\SWAP{\texttt{SWAP}}
\newcommand{\email}[1]{E-mail: \href{mailto:#1}{\texttt{#1}}}
\begin{document}
\thispagestyle{plain}

\setcounter{page}{1}

{\centering
\textbf{\LARGE Total variation based community detection using a nonlinear optimization approach}

\bigskip\bigskip
Andrea Cristofari$^*$, Francesco Rinaldi$^*$, Francesco Tudisco$^\dag$
\bigskip

}

\begin{center}
\small{\noindent$^*$Department of Mathematics ``Tullio Levi-Civita'' \\
University of Padua \\
Via Trieste 63, 35121 Padua (Italy) \\
E-mail: \texttt{andrea.cristofari@unipd.it}, \texttt{rinaldi@math.unipd.it} \\
\bigskip
$^\dag$Gran Sasso Science Institute \\
Viale F. Crispi 7, 67100 L'Aquila (Italy) \\
\email{francesco.tudisco@gssi.it} \\
}
\end{center}

\bigskip\par\bigskip\par
\noindent \textbf{Abstract.}
Maximizing the modularity of a network is a successful tool to identify an important community of nodes. However, this combinatorial optimization problem is known to be NP-complete. Inspired by  recent nonlinear modularity eigenvector approaches, we introduce the modularity total variation $TV_Q$ and show that its box-constrained global maximum coincides with the maximum of the original discrete modularity function.
Thus we describe a new nonlinear optimization approach to solve the equivalent problem leading to a community  detection strategy based on $TV_Q$. The proposed approach relies on the use of a fast first-order method
that embeds a tailored active-set strategy. We report extensive numerical comparisons with standard matrix-based approaches and the Generalized RatioDCA approach for nonlinear modularity eigenvectors, showing that our new method compares favourably with state-of-the-art alternatives. 

\bigskip\par
\noindent \textbf{Keywords.} Community detection. Graph modularity. Total variation. Nonlinear optimization. Active-set method.

\bigskip\par
\noindent \textbf{MSC2000 subject classifications.} 49M20. 65K10. 91D30. 91C20.

\section{Introduction}
Identifying important communities in a complex network is a difficult and  highly relevant problem which has
applications in different disciplines, including  social  network  analysis  \cite{mercado2016signed,ozer2016community},  molecular  biology  \cite{hartwell1999molecular},
politics \cite{porter2009communities}, computer science \cite{schaeffer2007graph} and many more (see e.g. \cite{newman2010networks}).  Community detection can be used as a way to highlight so--called  mesoscale  properties of a network and further investigation into such communities may allow to gain insights about the
network structure and the behavior of processes that take place on the network.

There is a large literature on the subject, with numerous different definitions of community and associated community detection methods \cite{fortunato2010community,lancichinetti2009community}.
In this work a community is roughly understood as a set of nodes being highly connected inside and poorly connected with the rest of the graph.
Although this high-level notion of  community is fairly easy to understand and broadly shared in the literature, a  great variety of rigorous definitions have been proposed to make the community detection task mathematically and algorithmically precise, including notions involving edge-counting, random walk trapping, information theory and generative models such as stochastic block models, see e.g.\ \cite{fortunato2016community}.  A  popular and successful idea, originally introduced by Newman and Girvan in \cite{newman2004finding},  is based on the optimization of the modularity quality function.

The modularity function of a set of  nodes~$S$ quantifies the difference between the actual and expected weight of edges in $S$, if edges were placed at random according to a random \textit{null model}. The set $S$ is then typically identified as a community  if its modularity is ``large enough''. Thus, the definition of modularity function depends on the choice of the random model to which the actual network is compared to. Although various choices for such a null model have been considered in the literature \cite{fasino2016generalized,traag2011narrow}, the arguably most popular and successful choice is based on the configuration model or Chung-Lu random graph \cite{chung2006complex}. This is the null model we will always assume throughout this work.

Before providing the definition of modularity, let us introduce the notation and terminology that will be used throughout the whole paper.

Given a vector $x$, we denote by $x_i$ the $i$th component of $x$.
Given a matrix $M$, we denote by $M_{ij}$ the entry of $M$ having position $(i,j)$.
Similarly, given a vector $x$ and a set of indices $I \subseteq \{1,\ldots,n\}$, we indicate by $x_I$ the subvector of $x$ with components $x_i$, $i \in I$.
Further, the gradient of a function $f \colon \R^n \to \R$ at $x$ is a column vector denoted by $\nabla f(x)$.

Now, let $G=(V,E)$ be an undirected network with node set $V=\{1,\dots,n\}$ and nonnegative weight matrix $A = (A_{ij}\geq 0)$ such that $A_{ij}>0$ if and only if $ij\in E$.
Let $d$ be the vector of (weighted) degrees, $d_i = \sum_{j}A_{ij}$, and let $\vol G = \sum_{i=1}^n d_i$ be the graph volume.
\begin{definition}[Modularity of a set]\label{def:Q(S)}
Let $S\subseteq V$ be any set of nodes. The  quantity
$$
Q(S) = \frac 1 {\vol G}\sum_{i,j\in S}\Big( A_{ij}- \frac{d_id_j}{\vol G}\Big)
$$
is the modularity measure of the set $S$.
\end{definition}
The rank-one term $d_id_j/\vol G$ is the one responsible for the null model and it has to be interpreted as the expected weight of edges in the graph, if edges were placed  independently at random according to the Chung-Lu model. The term $A_{ij}$, instead, is the actual weight of the edges and so, when $Q(S)>0$, the weight of edges in $S$ exceeds the expected weight. We call such an $S$ a module. Modules with large value of modularity are typically good indicators of communities in a network.
On the other hand, not all the subsets with positive modularity form a community and the range of attainable values of the modularity function may vary significantly from network to network, see e.g.~\cite{cinelli2019network,guimera2004modularity}. While discerning whether a module is or is not a community is itself an interesting research question, this issue goes beyond the scope of this work. Here we  assume that the ``most important'' module in $G$ is a decent indicator of a community and we formally state the following leading module problem:
\begin{definition}[Leading module problem]\label{def:leading_module}
Find $S^*\subseteq V$ such that
 \begin{equation}\label{eq:leading_module}
  Q(S^*) = \max_{S\subseteq V}Q(S) =: Q^* \,.
\end{equation}
\end{definition}
Note that, with Definition \ref{def:Q(S)}, if $S$ is a module, then also its complement $\bar S = V\setminus S$ is so. In fact, a simple computation reveals that
$$
Q(\bar S) = Q(S)
$$
for all $S\subseteq V$. In particular, both $Q(V)$ and $Q(\emptyset)$ are zero. So the leading module problem can be equivalently thought of as a leading bipartition problem, where one assesses the bipartition as a whole rather than the individual community. Already this simple observation shows that the solution to \eqref{eq:leading_module} is not unique. In general it is well known that very dissimilar sets $S$ may yield optimal (or nearly optimal) modularity values $Q(S)$ \cite{good2010performance}, and we are interested in computing any of them.

From Definition \ref{def:leading_module}, the  leading module  problem boils down to a combinatorial optimization problem which is known to be strongly NP-complete \cite{brandes2007finding}.
Different strategies have been proposed to compute an approximate solution. Linear relaxation approaches are based on the spectrum of specific matrices (as the modularity matrix or the Laplacian matrix) and have been widely explored and applied to various research areas, see e.g. \cite{fasino2014algebraic,pmlr-v84-mercado18a,newman2006finding,shen2010spectral}. Computational heuristics have been developed for optimizing directly the discrete quality function (see e.g. \cite{lancichinetti2009community,porter2009communities}), including for example greedy algorithms \cite{clauset2004finding}, simulated annealing \cite{guimera2004modularity} and  extremal optimization \cite{duch2005community}. Among them, the locally greedy algorithm known as Louvain method \cite{blondel2008fast} is arguably the most popular one.

However, most of the times, the proposed approaches are meant to address the multi-community problem where, rather than just $Q$, one aims at maximizing the sum $\sum_{k}Q(S_k)$ over the set of all possible partitions $\{S_1,S_2,\dots,\}$ of $V$. As the number of sets forming the partition is not initially prescribed, this problem is intrinsically different from the leading module problem and adapting the various approaches is most of the times either not trivial or not possible.
A model that is gaining increasing popularity in recent years and that adapts very well to the leading module case is based on nonlinear relaxation. This approach exploits the connection between modularity optimization and nonlinear eigenvalue problems \cite{buhler2009spectral,hein2010inverse,tudisco2016nodal,tudisco2018core}, on the one hand, and with total variation and image processing \cite{boyd2017simplified,bresson2013multiclass,hu2013method}, on the other. In particular, it is proven in \cite{tudisco2018community} that the optimization problem \eqref{eq:leading_module} can be equivalently recast into the optimization of a function of the type
\begin{equation}\label{eq:ratio}
r(x) = \frac{f(x)}{g(x)}
\end{equation}
where $f$ and $g$ are suitable continuous functions of real variables. As the two problems of maximizing $Q$ and maximizing $r$ coincide, this process is called \textit{nonlinear exact relaxation} of $Q$ \cite{tudisco2018community}. The advantage of this approach is that, when a local maximum of $r$ is computed, the corresponding maximizer can be used to  identify a community of nodes in the network that consistently  outperforms standard approaches based on non-exact relaxations. 

Furthermore, this result shows that one can recast the modularity combinatorial optimization problem \eqref{eq:leading_module} as a nonlinear eigenvector problem and approach it using strategies for the solution of nonlinear eigenproblems. 
In particular, inspired by the inverse power method for matrices, an algorithm called Generalized RatioDCA is proposed in \cite{tudisco2018community} for optimizing the ratio of functions in \eqref{eq:ratio}. Such method is an extension of the RatioDCA method originally developed in the context of unsupervised learning \cite{hein2011beyond}. Each iteration of  Generalized RatioDCA consists of two substeps that roughly go as follows: the first step (outer iteration) is based on the observation that any critical point $x$ of $r$ is such that $0\in \partial f(x) -\lambda \partial g(x)$, for some coefficient $\lambda\in \mathbb R$ and where $\partial$ denotes the subgradient. This, combined with the properties of the subgradient, allows us to 
transform the optimization of $r(x)$ into the optimization of a new function $\tilde r(x)$ being defined as the difference of convex functions; in the second step (inner iteration) the new function $\tilde r$ is optimized via a primal-dual algorithm such as FISTA \cite{beck2009fast} or PDHG \cite{chambolle2011first}. A  convergence result is proposed in \cite{tudisco2018community}, showing that the generated sequence eventually converges to a stationary point of $r$.
The main drawback of this approach is that the inner-outer optimization method proposed can be computationally demanding and thus makes this approach somewhat prohibitive for large datasets.

The contribution of this work is twofold:

\textbf{1.} We show, in Section \ref{sec:TV}, that  the leading community problem \eqref{eq:leading_module} is equivalent to the box-constrained optimization of the so--called ``modularity total variation'', a particular graph total variation function with positive and negative weights. This result, presented in Theorem \ref{thm:main}, improves the one of \cite{tudisco2018community} and shows that \eqref{eq:leading_module} is equivalent to the problem of maximizing a continuous real-valued function $f(x)$ subject to an arbitrary box constraint of the type $-a \le x_i \le b$, $a,b>0$. Moreover, it provides a simpler and self-contained proof of such equivalence result.
In other words, this result unveils the intriguing equivalence between the unconstrained combinatorial leading module optimization problem and the constrained but continuous optimization of $f(x)$ and thus it allows us to efficiently attack the leading module problem with techniques from continuous optimization.

\textbf{2.} Based on this theorem, in Section \ref{sec:algorithm} we propose an approach to optimize the modularity total variation function $f$. Precisely,
we first modify the nonsmooth total variation by considering an approximation $\tilde f$ with continuous derivatives, and then we maximize it using an algorithmic framework that embeds an active-set first-order method for box-constrained problems, named
Fast Active-SeT based Approximate Total Variation Optimization algorithm (\FASTATVO). We prove global convergence of \FASTATVO\
and provide extensive experiments which show that the computational performance and modularity quality score obtained with
the proposed approach compare favorably with the classical linear spectral method \cite{newman2006finding} and the Generalized RatioDCA method described in \cite{tudisco2018community}. The  \FASTATVO\ software that we developed is publicly available, as pointed out in Section~\ref{sec:experiments}.

The paper is structured as follows: In the next section we introduce the concept of modularity total variation, we review the linear and nonlinear spectral methods for modularity maximization and we prove our first main result, showing the equivalence between \eqref{eq:leading_module} and a continuous box-constrained optimization problem.
Then, in Section~\ref{sec:algorithm}, we give a detailed description of the optimization approach, introduce the general scheme of \FASTATVO{} and analyze its global convergence. In Section~\ref{sec:experiments}, we report extensive numerical experiments on real--world data to show the performance of our new method and discuss the complexity of the algorithm in more detail. Finally, we draw some conclusions
in Section~\ref{conclusions}.

\section{Exact modularity relaxation via total variation}\label{sec:TV}
The total variation is classically defined for continuous functions: Given a real smooth function $u:X\to\R$ with $X\subseteq \R^n$, one defines $TV_X(u)=\int_{X}|\nabla u|$. The extension to the graph setting  $G=(V,E)$ and to graph valued functions $u:V\to\R$ (which we always implicitly identify with vectors in $\R^n$) is done by replacing the gradient with the nonlocal directional derivative $\triangledown u:E\to\R$, which measures the variation of $u$ along the edge $ij\in E$ via $\triangledown u(ij) = A_{ij}(u_i-u_j)$. The graph total variation of $u$ is then
\begin{equation}\label{eq:TV}
TV_G(u) = \frac 1 2 \sum_{ij}|\triangledown u(ij)| = \frac 1 2 \sum_{ij} A_{ij}|u_i-u_j|
\end{equation}
where, since the graph is undirected, the $1/2$ is required due to the fact that the nonzero entries of the adjacency matrix $A$ account for twice the number of edges.

In the graph setting, the total variation \eqref{eq:TV} is strictly related with the concept of Lov\'asz extension of a set valued function $F:2^V\to\R$, which we recall in the following
\begin{definition}
	Consider a  function $F:2^V \to \R$. Given a vector $x \in \R^n$, reorder it so that  $x_1\leq x_2\leq \ldots \leq x_n$ and let $C_i \subseteq V$ be the set
	$C_i=\{k \in V : x_k\geq x_i\}$.
	The Lov\'asz extension $f_F:\R^n \to \R$ of $F$ is defined by
	$$f_F(x) = \sum_{i=1}^{n-1} F(C_{i+1})(x_{i+1}-x_{i})+F(V)x_1 \,.$$
\end{definition}
The relation between total variation and Lov\'asz extension is through the set-valued cut function $K:2^V\to \R$ defined by $K(S) = \sum_{i\in S, j\notin S}A_{ij}$. In fact, we have
\begin{lemma}\label{lem:1}
For any graph $G$ and any $x\in \R^n$ it holds $f_K(x) = TV_G(x)$.
\end{lemma}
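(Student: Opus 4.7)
The plan is to verify the identity $f_K(x) = TV_G(x)$ by direct computation, after first reducing to the case of a sorted vector. Since both sides are invariant under reordering the coordinates (the right-hand side manifestly, the left-hand side because the Lovász extension is defined via a sorted reordering), I may assume without loss of generality that the components of $x$ already satisfy $x_1 \leq x_2 \leq \cdots \leq x_n$. Under this assumption, the level sets take the explicit form $C_i = \{i, i+1, \ldots, n\}$, which is the starting point of the calculation.

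Next I would evaluate the definition of $f_K$ with $F = K$. The boundary term vanishes because $K(V) = \sum_{i \in V, j \notin V} A_{ij} = 0$ (there are no edges from $V$ to the empty set), so
\[
f_K(x) = \sum_{i=1}^{n-1} K(C_{i+1})\,(x_{i+1} - x_i).
\]
Then I would expand $K(C_{i+1}) = \sum_{k \leq i,\, \ell \geq i+1} A_{k\ell}$ and swap the order of summation so that a fixed pair $(k,\ell)$ with $k < \ell$ is summed against the telescoping contribution $\sum_{i=k}^{\ell-1}(x_{i+1} - x_i) = x_\ell - x_k$. This yields
\[
f_K(x) = \sum_{k<\ell} A_{k\ell}\,(x_\ell - x_k).
\]

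Finally, since we are in the sorted regime, $x_\ell - x_k = |x_\ell - x_k|$ for $k < \ell$, and using symmetry of the weight matrix $A_{k\ell} = A_{\ell k}$ the sum over ordered pairs $k<\ell$ can be rewritten as $\tfrac12 \sum_{k,\ell} A_{k\ell}|x_k - x_\ell|$, which is exactly $TV_G(x)$ by definition \eqref{eq:TV}. The only real step requiring care is the swap of summations and the telescoping identification of $K(C_{i+1})$ with a ``horizontal slice'' of the cut-weight at threshold $i$; once this bookkeeping is set up cleanly, the rest is essentially a one-line reindexing, and there is no serious technical obstacle.
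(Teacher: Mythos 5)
Your proposal is correct and follows essentially the same route as the paper: reduce to the sorted case, drop the $K(V)$ boundary term, expand the cut weights of the level sets, and swap/telescope the double sum into $\tfrac12\sum_{k,\ell}A_{k\ell}|x_k-x_\ell|$. The only cosmetic difference is that the paper introduces an index $s(i)$ to describe $C_{i+1}$ exactly in the presence of tied values, whereas you take $C_{i+1}=\{i+1,\dots,n\}$ outright; this is harmless since any term where they differ carries the factor $x_{i+1}-x_i=0$.
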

\begin{proof}
Assume for simplicity, and without loss of generality, that $x$ is such that $x_1\leq \dots \leq x_n$. Since by definition $K(V)=K(\emptyset) = 0$, we have
\begin{align*}
f_K(x) &= \sum_{i=1}^{n-1} K(C_{i+1})(x_{i+1}-x_{i})+K(V)x_1 \\
&=\sum_{i=1}^{n-1}|x_{i+1}-x_{i}|\sum_{k\in C_{i+1}, h\notin C_{i+1}} A_{hk}\\
&= \sum_{i=1}^{n-1}|x_{i+1}-x_{i}| \sum_{k\geq s(i+1), h< s(i+1)} A_{hk}\\
&=\sum_{k\geq h}A_{kh}|x_k-x_h|=TV_G(x)
\end{align*}
where, for $i\in V$, $s(i)\in V$ denotes the smallest index such that $x_k\geq x_{i}$ for all $k\geq s(i)$.
\end{proof}
The following result thus relates the total variation with the Lov\'asz extension of the modularity function.
\begin{lemma}\label{lem:2}
Let $G$ and $G_0$ be the graphs with adjacency matrices $A=(A_{ij})$ and $B=(d_id_j/\vol G)$, respectively. Then
$$
f_Q(x) =  \frac{TV_{G_0}(x)-TV_G(x)}{\vol G} = \frac 1 {2\vol G} \sum_{ij} \Big(\frac{d_id_j}{\vol G}-A_{ij}\Big) |x_i-x_j|
$$
for all $x\in \R^n$.
\end{lemma}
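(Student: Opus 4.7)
The plan is to reduce the statement to Lemma~\ref{lem:1} by rewriting the modularity set function $Q$ as an (affine) combination of cut functions, and then invoking the linearity of the Lov\'asz extension with respect to the set function.

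First, I would observe that the weighted degrees in $G_0$ coincide with those of $G$: indeed, for the matrix $B=(d_id_j/\vol G)$, the $i$th row sum equals $d_i(\sum_j d_j)/\vol G = d_i$. Next, for any $S\subseteq V$ and any symmetric nonnegative matrix $M=(M_{ij})$ with row sums $m_i=\sum_j M_{ij}$, the elementary identity
\[
\sum_{i,j\in S} M_{ij} \;=\; \sum_{i\in S} m_i \;-\; \sum_{i\in S,\, j\notin S} M_{ij}
\]
holds. Applying this with $M=A$ and with $M=B$ and noting that the degree terms $\sum_{i\in S} d_i$ cancel in the difference defining $Q(S)$, I obtain
\[
Q(S) \;=\; \frac{K_B(S)-K_A(S)}{\vol G},
\]
where $K_A$ and $K_B$ are the cut functions of $G$ and $G_0$ respectively. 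This identifies $Q$ as a scalar multiple of the difference of two cut functions on a common vertex set.

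The second step is to exploit linearity of the Lov\'asz extension in its set function argument: directly from the definition, $F\mapsto f_F$ is $\R$-linear, so $f_{\alpha F_1+\beta F_2}=\alpha f_{F_1}+\beta f_{F_2}$. Applying this to $Q=(K_B-K_A)/\vol G$ yields
\[
f_Q(x) \;=\; \frac{f_{K_B}(x)-f_{K_A}(x)}{\vol G}.
\]
Finally, Lemma~\ref{lem:1} applied to $G$ and to $G_0$ gives $f_{K_A}=TV_G$ and $f_{K_B}=TV_{G_0}$, which produces the first equality in the claim; the second equality is just the explicit form \eqref{eq:TV} of the graph total variation applied to both $G_0$ and $G$, with the factor $1/2$ absorbed by writing the sum over ordered pairs $(i,j)$.

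The only nontrivial step here is really the algebraic reduction to cut functions; the rest is a direct application of Lemma~\ref{lem:1} together with linearity of the Lov\'asz extension in the underlying set function, which is immediate from the definition. I do not foresee any genuine obstacle.
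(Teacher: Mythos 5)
Your proposal is correct and follows essentially the same route as the paper: both reduce $Q$ to the scaled difference of the cut functions of $G$ and $G_0$ via the observation that the row sums of $B$ equal the degrees $d_i$ (so the degree terms cancel), and then conclude by linearity of the Lov\'asz extension together with Lemma~\ref{lem:1}. The only difference is cosmetic packaging of the algebra, so there is nothing to add.
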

\begin{proof}
From the definition of $Q(S)$ we have
\begin{equation}\label{eq:TV1}
Q(S) = \frac{1}{\vol G}\Big(\sum_{ij\in S} A_{ij}-\frac{d_id_j}{\vol G}\Big) = \frac{1}{\vol G}\Big(\sum_{ij\in S}d_i - \frac{d_id_j}{\vol G} - (d_i - A_{ij})\Big)\, .
\end{equation}
Using the identities $d_i = \sum_{k\in V}A_{ik}=\sum_{k\in V}d_id_k/\vol G$, we get
\begin{align*}
\sum_{ij\in S}d_i-\frac{d_id_j}{\vol G} &= \sum_{ij\in S}\sum_{k\in V}\frac{d_id_k}{\vol G}-\frac{d_id_j}{\vol G}\\
& = \sum_{i\in S}\Big(\sum_{k\in V}\frac{d_id_k}{\vol G} - \sum_{j\in S}\frac{d_id_j}{\vol G}\Big) = \sum_{i\in S, j\notin S}\frac{d_id_j}{\vol G}
\end{align*}
and similarly
$$
\sum_{ij\in S}d_i-A_{ij} = \sum_{ij\in S}\sum_{k\in V}(A_{ik}-A_{ij})=\sum_{i\in S,j\notin S}A_{ij}\, .
$$
Thus, plugging the last two identities in \eqref{eq:TV1} we get
$$
Q(S) = \frac{1}{\vol G} \Big( K_0(S)-K(S)\Big)
$$
where $K_0(S)=\sum_{i\in S, j\notin S}B_{ij}$ is the cut function of the graph $G_0$ with adjacency matrix $B_{ij}=d_id_j/\vol G$.
From Lemma \ref{lem:1} and the linearity of the Lov\'asz extension,  $f_{F_1+F_2}=f_{F_1}+f_{F_2}$,  we conclude.
\end{proof}

So the Lov\'asz extension of $Q$ is the difference of two total variations or, equivalently, it is the total variation on the graph $G^{\pm}$ with real valued adjacency matrix $M=B-A = (d_id_j/\vol G- A_{ij})$, i.e.\ the graph whose nodes are $V$ and such that a weighted signed edge with weight $M_{ij}$ exists between $i$ and $j$ if and only if $M_{ij}\neq 0$. As this function is the main tool of our analysis we call it \textit{modularity total variation} and denote it from now on as
$$
TV_{Q}(x) = \frac 1 2 \sum_{ij} \Big(\frac{d_id_j}{\vol G}-A_{ij}\Big) |x_i-x_j|\, .
$$

Let $\one_S$ be the binary vector $(\one_S)_i=1$ if $i\in S$ and $(\one_S)_i=0$ otherwise. It is well known that, for any set valued function $F:2^V\to\R$, it holds $f_F(\one_S)=F(S)$ (see e.g.\ \cite{Bach2011}). Therefore, from Lemma \ref{lem:2}, $TV_{Q}(\one_S)/\vol G=f_Q(\one_S)=Q(S)$. However, for the specific modularity function, the following additional formula holds
\begin{lemma}\label{lem:interpol}
Let $S\subseteq V$ and let $a,b \in \R$. We have
$$
TV_{Q}(a \one_S + b \one_{\bar S}) = \vol G \cdot |a-b| \cdot Q(S) \,.
$$
\end{lemma}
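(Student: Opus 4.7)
The plan is to substitute the vector $x = a\one_S + b\one_{\bar S}$ directly into the definition of $TV_Q$ and exploit the very simple two-level structure of $x$.

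First I would note that for this specific $x$, the differences $|x_i-x_j|$ take only two values: zero when $i,j$ are on the same side of the partition $\{S,\bar S\}$, and $|a-b|$ when they lie on opposite sides. Hence, after plugging into
$$
TV_Q(x) = \frac{1}{2}\sum_{ij}\Big(\frac{d_id_j}{\vol G}-A_{ij}\Big)|x_i-x_j|,
$$
only the pairs crossing the cut survive, and the scalar $|a-b|$ factors out.

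Next I would collapse the symmetric double sum over crossing pairs. By symmetry of the weights $d_id_j/\vol G - A_{ij}$ in $i,j$, the contributions from $i\in S, j\notin S$ and $i\notin S, j\in S$ coincide, so the $\tfrac 1 2$ cancels and one is left with
$$
TV_Q(x) = |a-b|\sum_{i\in S,\, j\notin S}\Big(\frac{d_id_j}{\vol G}-A_{ij}\Big) = |a-b|\,\bigl(K_0(S)-K(S)\bigr),
$$
where $K_0$ is the cut function of the graph $G_0$ with adjacency matrix $B_{ij}=d_id_j/\vol G$ and $K$ is the cut function of $G$, exactly as in the proof of Lemma~\ref{lem:2}.

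Finally, I would invoke the identity $Q(S) = (K_0(S)-K(S))/\vol G$ established en route to Lemma~\ref{lem:2}, which immediately yields $TV_Q(a\one_S+b\one_{\bar S}) = \vol G\cdot|a-b|\cdot Q(S)$. The argument is essentially a direct computation, so there is no real obstacle; the only bookkeeping point to be careful about is ensuring the $\tfrac{1}{2}$ prefactor is correctly absorbed by the symmetry between the two orientations of each crossing edge, rather than being accidentally retained.
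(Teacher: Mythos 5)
Your proposal is correct and follows essentially the same route as the paper: only the cut-crossing pairs survive, $|a-b|$ factors out, and the symmetry of $M_{ij}=d_id_j/\vol G-A_{ij}$ absorbs the $\tfrac12$. The only cosmetic difference is that you close by citing the identity $Q(S)=(K_0(S)-K(S))/\vol G$ from the proof of Lemma~\ref{lem:2}, whereas the paper re-derives it on the spot from the zero row sums $\sum_{j\in V}M_{ij}=0$; both are valid.
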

\begin{proof}
Let  $M_{ij} = B_{ij}-A_{ij}= d_id_j/\vol G - A_{ij}$. As $|x_i-x_j|=0$ when $x=a\one_S+b\one_{\bar S}$ and $i,j$ are either both in $S$ or both in $\bar S$, we have
\begin{equation}\label{eq:1}
2\cdot TV_{Q}(a \one_S + b \one_{\bar S}) = \sum_{i\in S, j\notin S}M_{ij}|a-b| + \sum_{i\notin S, j\in S}M_{ij}|b-a| \,.
\end{equation}
Now notice that $\sum_{i\in S, j \notin S}M_{ij} = \sum_{i\in S,j\in V}M_{ij}-\sum_{i,j\in S}M_{ij}$, moreover
$$
\sum_{j\in V}M_{ij} = \frac{d_i}{\vol G}\sum_{j\in V}d_j - \sum_{j\in V} A_{ij} = d_i-d_i = 0\, .
$$
Combined with \eqref{eq:1} we get
$$
\frac{TV_{Q}(a \one_S + b \one_{\bar S})}{\vol G\, |a-b|} = \frac {1}{\vol G}\sum_{ij\in S}\Big(A_{ij}-\frac{d_id_j}{\vol G}\Big)= Q(S)\, ,
$$
concluding the proof.
\end{proof}

Lemma \ref{lem:interpol} shows that the modularity total variation $TV_Q$ somewhat interpolates the modularity set valued function $Q$ on vectors having only two different values across all the entries. Note that all these vectors identify uniquely a set of nodes and, in fact, there is a one-to-one correspondence between vectors of the type $a\one_S+b\one_{\bar S}$ and the subsets of $V$. However, a deeper and more relevant property of the modularity total variation $TV_Q$ is that its maximum coincides (up to scaling) with the maximum of $Q$. This was proved in \cite{tudisco2018community} and is the reason why $TV_Q$ is therein called a \textit{nonlinear exact relaxation} of $Q$. We generalize that result in the following theorem, where we also provide a simpler and self--contained proof. 
\begin{theorem}\label{thm:main}
Let $a, b >0$ and let $x^*\in \R^n$ be a solution of
\begin{equation}\label{eq:optimization_TVQ}
\begin{array}{ll}
\max& TV_Q(x)\\
\mbox{s.t. }&-a\leq x_i \leq b, \quad i=1,\dots, n.
\end{array}
\end{equation}
Then $TV_Q(x^*) =  \vol G \cdot (a+b)\cdot \max_{S\subseteq V} Q(S)$. 
\end{theorem}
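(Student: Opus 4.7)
I would show that $TV_Q(x^*)$ is sandwiched between two quantities both equal to $\vol G\cdot(a+b)\cdot\max_S Q(S)$, giving the result.

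For the lower bound, observe that for any $S\subseteq V$ the vector $y_S := b\,\one_S - a\,\one_{\bar S}$ has entries in $\{-a,b\}\subseteq[-a,b]$, hence it is feasible for \eqref{eq:optimization_TVQ}. Applying Lemma~\ref{lem:interpol} with the two values taken to be $b$ and $-a$ gives $TV_Q(y_S) = \vol G\cdot(a+b)\cdot Q(S)$. Since $x^*$ is a maximizer over the box,
\[
TV_Q(x^*) \;\geq\; \max_{S\subseteq V}TV_Q(y_S) \;=\; \vol G\cdot(a+b)\cdot\max_{S\subseteq V}Q(S).
\]

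For the upper bound---the crux of the argument---I would use the co-area (layer-cake) representation of the Lov\'asz extension. After reordering so that $x_1\leq\cdots\leq x_n$, the defining formula gives $f_Q(x)=\sum_{i=1}^{n-1}Q(C_{i+1})(x_{i+1}-x_i)+Q(V)\,x_1$. Using $Q(V)=0$ and noting that on each interval $(x_i,x_{i+1})$ the super-level set $\{k:x_k>t\}$ is the constant set $C_{i+1}$, the telescoping sum can be rewritten as
\[
f_Q(x) \;=\; \int_{-\infty}^{\infty} Q\bigl(\{k:x_k>t\}\bigr)\,dt,
\]
the improper endpoints causing no trouble precisely because $Q(V)=Q(\emptyset)=0$. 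For $x\in[-a,b]^n$ the integrand is zero outside $[-a,b]$ (same reason), so integrating the pointwise bound $Q(\{k:x_k>t\})\le\max_{S}Q(S)$ gives $f_Q(x)\le(a+b)\cdot\max_S Q(S)$. Multiplying by $\vol G$ and invoking the identity $TV_Q=\vol G\cdot f_Q$ from Lemma~\ref{lem:2} yields
\[
TV_Q(x^*) \;\leq\; \vol G\cdot(a+b)\cdot\max_{S\subseteq V}Q(S),
\]
which combined with the lower bound proves the claim.

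The main obstacle is simply the clean justification of the layer-cake identity: one must rewrite the piecewise-constant sum as an integral and verify that the term $Q(V)x_1$ and the infinite tails are harmless exactly because $Q$ vanishes on $V$ and $\emptyset$. This vanishing is the whole reason the continuous relaxation is tight on the box $[-a,b]^n$. Everything else---the feasibility of $y_S$ and the direct substitution into Lemma~\ref{lem:interpol}---is immediate, and no further structural property of $Q$ is needed beyond $Q(V)=Q(\emptyset)=0$ and the existence of the Lov\'asz extension.
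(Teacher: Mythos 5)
Your proof is correct and follows essentially the same route as the paper: the lower bound via feasibility of $b\,\one_S - a\,\one_{\bar S}$ together with Lemma~\ref{lem:interpol} is identical, and your layer-cake identity $f_Q(x)=\int Q(\{k:x_k>t\})\,dt$ is just the integral form of the paper's discrete telescoping bound $\sum_i Q(C_{i+1})(x_{i+1}-x_i)\le\bigl(\max_i Q(C_{i+1})\bigr)(x_n-x_1)\le(a+b)\max_S Q(S)$, both resting on $Q(V)=Q(\emptyset)=0$. No gaps.
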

\begin{proof}
For ease of  notation, and without loss of generality, we prove the theorem assuming $\vol G=1$. Let $\mathcal B(a, b)=\{x\in \R^n : -a\leq x_i\leq b\}$. As $u_S = b \one_S -a\one_{\bar S} \in \mathcal B(a,b)$, from Lemma \ref{lem:interpol} we get
$$
(a+b)\max_{S\subseteq V}Q(S) = \max_{S\subseteq V}TV_Q(u_S) \leq \max_{u \in \mathcal B(a,b)}TV_Q(u)\, .
$$
Vice-versa, recall the identities $TV_Q(x)=f_Q(x)$ and $Q(V)=0$. For any $u\in \mathcal B(a,b)$ assume w.l.o.g.\ $u_1 \le \ldots \le u_n$. We have
\begin{align*}
TV_Q(u) &= \sum_{i=1}^{n-1}Q(C_{i+1})(u_{i+1}-u_i) \leq \Big(\max_{i=1,\dots,n-1} Q(C_{i+1})\Big) (u_n-u_1)  \\
&\leq \Big(\max_{i=1,\dots,n-1} Q(C_{i+1})\Big) (a+b)\,
\end{align*}
where $C_{i} = C_i(u) =\{k : u_k\geq u_i\}$. Therefore
$$
\max_{u\in \mathcal B(a,b)}TV_Q(u) \leq (a+b)\max_{u\in \mathcal B(a,b)}\max_{i=1,\dots,n-1} Q(C_{i+1}) = (a+b)\max_{S\subseteq V}Q(S)
$$
which shows the reverse inequality. Thus the maximum of \eqref{eq:optimization_TVQ} coincides with $(a+b)$ times the maximum of $Q$, concluding the proof.
\end{proof}

By Theorem \ref{thm:main} we can now transform the leading module problem \eqref{eq:leading_module} into the box constrained optimization problem \eqref{eq:optimization_TVQ}, where we seek for the maximum of $TV_Q$ over the box $\mathcal B(a,b) = \{x\in \R^n:-a\leq x_i\leq b, \, i = 1,\ldots,n\}$. Note that we can freely choose the parameters $a,b>0$. Note moreover that the maximum $x^*$ is always attained on the border of $\mathcal B(a,b)$, i.e.\ it is of the form $x^*=b\one_S-a\one_{\bar S}$. This is shown by the following
\begin{corollary} \label{thm:sol_on_boundary}
Let $x^*$ be as in Theorem \ref{thm:main}. Then $x^* \in \partial \mathcal B(a,b)$, the  border of the box $\mathcal B(a,b)$.
\end{corollary}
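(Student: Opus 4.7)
The plan is to exploit the positive $1$-homogeneity of $TV_Q$: inspection of the definition $TV_Q(x) = \tfrac{1}{2}\sum_{ij}(d_id_j/\vol G - A_{ij})|x_i - x_j|$ shows that $TV_Q(\alpha x) = \alpha\, TV_Q(x)$ for every $\alpha \geq 0$, because each absolute value scales linearly.

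I would then argue by contradiction in the non-degenerate regime $\max_{S\subseteq V}Q(S) > 0$, which is the relevant case for community detection. Theorem~\ref{thm:main} gives $TV_Q(x^*) = (a+b)\vol G \cdot \max_S Q(S) > 0$, so in particular $x^* \ne 0$. Suppose, for contradiction, that $x^*$ lies in the interior of $\mathcal{B}(a,b)$, meaning $-a < x_i^* < b$ for every $i$. Since these inequalities are all strict, one can choose $\alpha > 1$ small enough that $\alpha x^*$ still belongs to $\mathcal{B}(a,b)$. Homogeneity then forces $TV_Q(\alpha x^*) = \alpha\, TV_Q(x^*) > TV_Q(x^*)$, contradicting the maximality of $x^*$. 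Hence $x^* \in \partial \mathcal{B}(a,b)$.

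The routine piece is checking the existence of a valid rescaling factor, which amounts to verifying that the largest $\alpha$ preserving the box constraint---roughly $\min_i\min\{b/x_i^*,\, -a/x_i^*\}$, taken with the appropriate sign on each coordinate---is strictly greater than $1$; this is immediate from the strict interior assumption combined with $x^* \neq 0$. The only mildly delicate point I would flag is the degenerate case $\max_S Q(S) = 0$, in which interior points may tie for the value $TV_Q(x^*) = 0$ and the statement should be interpreted as ``there exists a boundary maximizer''; in that case one produces such a maximizer directly from Lemma~\ref{lem:interpol} by taking $b\one_{S^*} - a\one_{\bar{S}^*}$ for any $S^* \in \argmax Q$.
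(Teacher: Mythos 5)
Your proof is correct, but it takes a different route from the paper's. The paper argues by contradiction via box-shrinking: if $x^*$ were interior, it would lie in a strictly smaller box $\mathcal B(a',b')$ with $a'+b'<a+b$, and re-applying Theorem~\ref{thm:main} to that smaller box gives $(a+b)Q^*\le (a'+b')Q^*$, a contradiction. You instead scale the \emph{point} rather than the box, using only the positive $1$-homogeneity of $TV_Q$: an interior nonzero maximizer can be dilated by some $\alpha>1$ while staying feasible, strictly increasing $TV_Q$. The two arguments are cousins (both ultimately exploit homogeneity), but yours is more self-contained — it does not need Theorem~\ref{thm:main} to hold for arbitrary boxes, only the positivity of the optimal value and the scaling identity $TV_Q(\alpha x)=\alpha\,TV_Q(x)$. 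A point in your favor: both your argument and the paper's silently require $\max_S Q(S)>0$ (the paper's ``this is not possible as $(a+b)>(a'+b')$'' fails when $Q^*=0$, since $0\le 0$ is no contradiction; e.g.\ for a complete graph $Q^*=0$ and $x=0$ is an interior maximizer). You are the only one who flags this degenerate case explicitly and notes that the corollary must then be read as asserting the \emph{existence} of a boundary maximizer, which Lemma~\ref{lem:interpol} supplies directly. That is a more careful treatment than the paper's own.
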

\begin{proof}
Suppose by contradiction that $x^*\notin \partial \mathcal B(a, b)$. Then, there exist $a', b'>0$ with $-a<-a'<b'<b$ such that $x^*\in \mathcal B(a',b')$. From Theorem \ref{thm:main} we  have
$$(a+b)\max_{S\subseteq V}Q(S) = TV_Q(x^*) \leq \max_{x\in \mathcal B(a',b')}TV_Q(x) = (a'+b')\max_{S\subseteq V}Q(S)$$
and this is not possible as $(a+b)>(a'+b')$.
\end{proof}

\subsection{Comparison with the nonlinear spectral approach}\label{sec:grdca}
Nonlinear modularity eigenvectors are defined in \cite{tudisco2018community} as critical points of the Rayleigh quotient
$$
r_Q(x) =  \frac{TV_Q(x)}{\|x\|_\infty}
$$
and nonlinear eigenvectors $x^*$ corresponding to the largest eigenvalue  $\lambda^*=r_Q(x^*)$ are there used to locate leading communities. Due to the homogeneity of $TV_Q$, it is simple to observe that, when $a=b=1$, any maximizer of $r_Q$ is a solution to  \eqref{eq:optimization_TVQ} and vice-versa. In fact, note that for $a=b=1$, we have $B(a,b) = \{x:\|x\|_\infty\leq 1\}$. Thus, as $TV_Q(\lambda\, x)=\lambda\, TV_Q(x)$, for any scalar $\lambda>0$, we obtain
\begin{align*}
\max_{x\in \mathcal B(a,b)}TV_Q(x) &\geq \max_{\|x\|_\infty=1}TV_Q(x) = \max_{x\in\R^n}TV_Q\left(\frac{x}{\|x\|_\infty}\right) =\max_{x\in\R^n}r_Q(x)\\
& \geq \max_{x\in \mathcal B(a,b)} r_Q(x) \geq \max_{x\in \mathcal B(a,b)}TV_Q(x)
\end{align*}
which implies that all the above inequalities are indeed identities.

\section{Description of the nonlinear optimization approach}\label{sec:algorithm}
In order to compute an approximate solution to ~\eqref{eq:optimization_TVQ}, we first consider a smooth approximation of problem $TV_Q$. In particular, we replace
the modularity total variation with the following smooth function:
\begin{equation*}
TV_{Q}^p(x) = \frac 1 2 \sum_{ij} \Big(\frac{d_id_j}{\vol G}-A_{ij}\Big) |x_i-x_j|^p,
\end{equation*}
with $p>1$. Hence, the new problem we want to solve has the form:
\begin{equation}\label{eq:optimization_STVQ}
\begin{array}{ll}
\max& TV_Q^p(x)\\
\mbox{s.t. }& -a \le x_i \le b, \quad i=1,\dots, n,
\end{array}
\end{equation}
with $a, b > 0$.
Even if~\eqref{eq:optimization_STVQ} is still a hard non-convex problem, we can now use first-order methods to compute
a good approximate solution of the original problem~\eqref{eq:optimization_TVQ}.
In particular, in view of Corollary \ref{thm:sol_on_boundary}, we expect that also for the smooth problem~\eqref{eq:optimization_STVQ} good solutions lie on the boundary of the feasible set $\mathcal B(a,b)$.
For this reason, we develop an active-set method
that dynamically selects, at each iteration, a set of variables $x_i$ to be safely fixed at the boundary, and then optimizes over the free subspace (i.e., the subspace of variables not fixed).
Further, we propose a global optimization framework, which embeds the active-set algorithm, to improve the overall quality function score.

As we will see, even though $TV_Q^p$ and $TV_Q$ may differ when $p>1$, using such a  smooth approximation and the proposed optimization algorithm allows us to both
\begin{itemize}
\item  significantly increase the value of $TV_Q$ (or, equivalently, of $Q$), and
\item significantly reduce the required computational time
\end{itemize}
with respect to the previously proposed Generalized RatioDCA.


In the following, we first describe our active-set algorithm (in Subsection~\ref{sub:LocalSearch})
and then we report the global optimization framework (in Subsection~\ref{sub:Global}).
Finally, in Subsection~\ref{sub:complexity} we carry out a complexity analysis of function and gradient computation,
which is useful to design an efficient implementation of the algorithm.

\subsection{An algorithm for the approximate total variation optimization}\label{sub:LocalSearch}
In order to describe the algorithm we use for solving problem~\eqref{eq:optimization_STVQ},
let us consider a general bound-constrained optimization problem of the following form:
\begin{equation}\label{prob}
\min\ \{f(x): l \le x \le u\},
\end{equation}
where the inequalities between vectors are meant to hold componentwise,
$f \in C^1(\R^n)$, $l,u \in \R^n$ and $l<u$. Clearly, problem~\eqref{eq:optimization_STVQ} is a particular case of~\eqref{prob},
obtained by setting $f(x) = -TV_Q^p(x)$ and $l_i = -a$, $u_i = b$, $i = 1,\ldots,n$.

From now on, given a vector $x \in \R^n$, let us indicate by $[x]^\sharp$ the projection of $x$ onto the feasible set of problem~\eqref{prob}.
Since we consider a non-convex objective function $f$, we also need to recall the definition of stationarity:
a point $x^* \in [l,u]$ is said to be \textit{stationary} for problem~\eqref{prob} if
\begin{equation*}
\norm{x^* - [x^* - \nabla f(x^*)]^{\sharp}} = 0,
\end{equation*}
or equivalently,
\begin{subequations}\label{stationarity}
\begin{align}
\nabla_i f(x^*) = 0,   & \quad i \colon l_i < x^*_i < u_i, \\
\nabla_i f(x^*) \ge 0, & \quad i \colon x^*_i = l_i, \\
\nabla_i f(x^*) \le 0, & \quad i \colon x^*_i = u_i.
\end{align}
\end{subequations}

The algorithm we propose here belongs to the class of \textit{active-set methods} (see, e.g.,
\cite{cristofari:2017,curtis2015globally,desantis:2012,ferreau2014qpoases,hager:2006,nocedal2006sequential} and references therein),
where an estimate of the active (or binding) inequality constraints at the final solution is iteratively updated at each iteration.
In box-constrained problems, the estimate gets in practice a subset of variables that can be fixed to the bounds at each iteration.
This reduces the complexity of the search direction step and it turns out to be extremely useful in our context,
where good points are likely to lie on the boundary of the feasible set.

Using the stationarity conditions~\eqref{stationarity}, we can define the following
active and non-active set estimates:
\begin{subequations}\label{active_set_estimates}
\begin{align}
A_l(x) & = \{i \colon x_i = l_i, \, \nabla_i f(x)>0\}, \label{Al} \\
A_u(x) & = \{i \colon x_i = u_i, \, \nabla_i f(x)<0\}, \label{Au} \\
N(x)   & = \{i \colon  i \notin A_l(x) \cup A_u(x)\}. \label{N}
\end{align}
\end{subequations}
In particular, for every feasible point $x$, the sets $A_l(x)$ and $A_u(x)$ contain the indices of those variables we estimate to be active
at the lower and the upper bounds, respectively, in the final solution returned by the algorithm.
Vice--versa, $N(x)$ contains the indices of those variables we estimate to be non-active at the final solution.
Note that, by a slight abuse of terminology, we say that  a variable is \textit{active} if one of its bound constraints is active.

By adapting classic results for active-set strategies (see, e.g.,~\cite{dipillo:1984,facchinei:1995}), it is easy to prove the following proposition, whose proof is omitted here for the sake of brevity.
\begin{proposition}\label{prop:estim}
Let $x^*$ be a stationary point for problem~\eqref{prob}. Then, there exists a neighborhood ${\Omega}(x^*,\rho) = \{x: \|x-x^*\|\leq \rho\}$ such that
\begin{gather*}
\{i \, \colon \, x^*_i = l_i, \, \nabla_i f(x^*)>0 \} \subseteq A_l(x) \subseteq \{i \, \colon \, x^*_i = l_i\}, \\
\{i \, \colon \, x^*_i = u_i, \, \nabla_i f(x^*)<0 \} \subseteq A_u(x) \subseteq \{i \, \colon \, x^*_i = u_i\},
\end{gather*}
for each $x \in {\Omega}(x^*,\rho)$.
\end{proposition}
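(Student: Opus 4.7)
The plan is to establish the two chains of containments for $A_l(x)$ separately, with the argument for $A_u(x)$ following by a mirror construction. The two key ingredients are continuity of $\nabla f$ and finiteness of the index set $\{1,\ldots,n\}$: both allow one to reduce a uniform statement about a neighborhood of $x^*$ to finitely many strict inequalities that survive a small perturbation.

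First I would handle the outer inclusion $A_l(x) \subseteq \{i : x^*_i = l_i\}$. The idea is that if $x^*_i > l_i$ at some coordinate, then a sufficiently small perturbation of $x^*$ cannot drive that coordinate down to $l_i$. Concretely, I would define
$$
\delta_1 = \min\bigl(\{x^*_i - l_i : x^*_i > l_i\} \cup \{u_i - x^*_i : x^*_i < u_i\}\bigr),
$$
with the convention $\delta_1 = +\infty$ if the set above is empty; by finiteness of $\{1,\dots,n\}$ one has $\delta_1 > 0$. Then for $\|x-x^*\| \le \delta_1/2$ one gets $x_i > l_i$ whenever $x^*_i > l_i$, so the definition~\eqref{Al} forces the implication $i \in A_l(x) \Rightarrow x^*_i = l_i$, and the analogous bound with $u_i - x^*_i$ covers $A_u(x)$.

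Next I would handle the inner inclusion $\{i : x^*_i = l_i,\, \nabla_i f(x^*)>0\} \subseteq A_l(x)$ by invoking continuity of $\nabla f$: each strongly lower-active index $i$ admits a radius on which $\nabla_i f$ remains strictly positive, and the minimum of these finitely many radii yields some $\delta_2>0$ such that $\nabla_i f(x)>0$ on $\Omega(x^*,\delta_2)$ for every such $i$; the strongly upper-active indices are handled symmetrically. Setting $\rho = \min(\delta_1/2,\delta_2)$ then produces a single neighborhood on which all four inclusions reduce to the constructions above.

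The main obstacle is precisely the inner inclusion: $i$ enters $A_l(x)$ only when $x_i = l_i$, a condition that the ball $\Omega(x^*,\rho)$ does not by itself enforce on coordinates with $x^*_i = l_i$. This is the customary subtlety in active-set identification results and is why the proposition is phrased as a sandwiched inclusion rather than an equality: the inner inclusion is understood in the sense that, whenever the test point happens to sit at the bound on a strongly active coordinate of $x^*$, the estimate correctly labels it. This is exactly the form of conclusion obtained in the classical frameworks of~\cite{dipillo:1984,facchinei:1995} that the proposition adapts.
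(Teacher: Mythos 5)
Your argument for the outer inclusions is correct and is the standard one: by finiteness of the index set the quantity $\delta_1=\min\bigl(\{x^*_i-l_i : x^*_i>l_i\}\cup\{u_i-x^*_i : x^*_i<u_i\}\bigr)$ is positive, and on the ball of radius $\delta_1/2$ no coordinate with $x^*_i>l_i$ can reach $l_i$, so $i\in A_l(x)$ forces $x^*_i=l_i$ (symmetrically for $A_u$). Bear in mind that the paper offers no proof to compare against: it asserts that the proposition follows ``by adapting classic results'' from \cite{dipillo:1984,facchinei:1995} and omits the argument, so the only question is whether your proof establishes the statement as written.

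For the inner inclusions it does not, and you say so yourself. Continuity of $\nabla f$ gives $\nabla_i f(x)>0$ on a neighborhood for every strongly lower-active index, but membership in $A_l(x)$ as defined in~\eqref{Al} also requires $x_i=l_i$ exactly, which a generic $x\in\Omega(x^*,\rho)$ does not satisfy; your closing paragraph replaces the claimed inclusion by a conditional statement (``whenever the test point happens to sit at the bound''), which is a reinterpretation of the proposition, not a proof of it. The honest conclusion is that the inner inclusion is false as literally written: take $n=1$, $f(x)=x$, $l=0$, $u=1$; then $x^*=0$ is stationary, the left-hand set is $\{1\}$, yet $A_l(x)=\emptyset$ for every $x\in(0,\rho]$, for any $\rho>0$. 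The classical identification results you and the paper invoke concern thresholded estimates of the form $x_i\le l_i+\epsilon(x)\,\nabla_i f(x)$, and it is precisely that slack which makes the inner inclusion hold on a full neighborhood; with the exact-equality estimate of~\eqref{active_set_estimates}, only the outer inclusions and your conditional version of the inner ones survive. So the gap you flag is real, but it lies in the statement rather than in your argument: what you prove is the strongest form of the proposition that is true for the estimate as defined (and, incidentally, neither direction ever uses stationarity of $x^*$). If you wanted to repair the statement rather than reinterpret it, the fix is to modify the estimate, not the proof.
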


The detailed description of the proposed algorithm, that we name Fast Active-SeT based Approximate Total Variation Optimization (\FASTATVO),
can be found in Appendix~\ref{appendix}, together with its convergence analysis.
In what follows, instead, we  detail the main ideas behind \FASTATVO{} and a sketch of its  structure. 
First, let us emphasize the two features that, together with the active-set estimate given in~\eqref{active_set_estimates}, most characterize our algorithm:
\begin{enumerate}[label=(\roman*)]
\item the use of a non-monotone stabilization technique \cite{grippo1991class} that allows us to avoid the objective function computation at every iteration;
\item the possibility to update only a subset of variables at every iteration.
\end{enumerate}
Both the above algorithmic features are particularly well suited for our specific problem, since, as we will show in Subsection~\ref{sub:complexity},
the computation of $TV_Q^p$
\begin{itemize}
\item is in general computationally expensive (and then, it is convenient to avoid computing it at every iteration);
\item can be implemented in an efficient way when a small subset of variables is changed from one iteration to another
    (and then, decomposition approaches are convenient, especially when the problem dimension is large).
\end{itemize}
In particular, note that computing $TV_Q^p$ on a vector $x\in \mathbb R^n$ can be significantly more expensive than computing the modularity $Q$ of a set $S\subseteq V$, especially for small sets $S$, as the two would be equivalent only if the number of nonzero entries of $x$ is exactly $|S|$, which we cannot expect in general.

The main steps of \FASTATVO\ are overviewed in Algorithm~\ref{alg:FAST_ATVO_SHORT}
(see Algorithm~\ref{alg:FAST_ATVO} in Appendix~\ref{appendix} for a detailed description).

\begin{algorithm}[h]
\caption{\FASTATVO($x^0$) -- short scheme}
\label{alg:FAST_ATVO_SHORT}
\begin{algorithmic}\setlength{\itemsep}{2pt}
\scriptsize
\vspace{5pt}
\item[] Given a feasible point $x^0$, set $k = 0$
\item[] While $x^k$ is non-stationary for problem~\eqref{prob}
\item[]\hspace{0.5truecm} Compute $A_l^k=A_l(x^k)$, $A_u^k=A_u(x^k)$ and $N^k=N(x^k)$
\item[]\hspace{0.5truecm} Choose $W^k \subseteq N^k$, set $d^k_{A_l^k} = 0$, $d^k_{A_u^k} = 0$, $d^k_{N^k \setminus W^k} = 0$ and compute $d^k_{W^k}$
\item[]\hspace{0.5truecm} Compute a stepsize $\alpha^k$ by a non-monotone stabilization strategy, set
                           \par $x^{k+1} = [x^k+\alpha^k d^k]^{\sharp}$ and $k = k + 1$
\item[] End while
\vspace{5pt}
\end{algorithmic}
\end{algorithm}

We see that,
at the beginning of every iteration $k$, we compute the active and non-active set estimates as in~\eqref{active_set_estimates} and define
\begin{equation*}
A_l^k = A_l(x^k), \quad A_u^k = A_u(x^k) \quad \text{and} \quad N^k = N(x^k).
\end{equation*}
Then, a (non-empty) working set $W^k \subseteq N^k$ is chosen according to a Gauss-Southwell-type (or greedy) rule and a first-order
search direction $d^k$ is computed such that $d^k_{A_l^k} = 0$, $d^k_{A_u^k} = 0$ and $d^k_{N^k \setminus W^k} = 0$.
Namely, $d^k$ is computed in the reduced variable subspace defined by $W^k$.
Finally, a non-monotone stabilization strategy is used to compute a stepsize $\alpha^k$ and generate the new iterate $x^{k+1} = [x^k+\alpha^k d^k]^{\sharp}$.

The following theorem shows the global convergence of \FASTATVO\ to stationary points.
\begin{theorem}\label{thm:conv}
Let $\{x^k\}$ be the sequence of points generated by \FASTATVO. Then, either
an  integer $\bar k\geq 0$ exists  such  that $x^{\bar k}$
is  a  stationary  point  for  problem \eqref{prob}, or else the sequence
$\{x^k\}$ is infinite and every limit point $x^*$
of the sequence is a stationary point for problem \eqref{prob}.
\end{theorem}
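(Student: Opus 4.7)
The plan splits naturally into the two cases matching the statement. If the while-loop terminates at some iteration $\bar k$, then by the loop condition $x^{\bar k}$ fails the non-stationarity test and is thus a stationary point of~\eqref{prob}. The substantive case is when $\{x^k\}$ is infinite; here I would fix an arbitrary limit point $x^*$ together with a convergent subsequence $\{x^{k_j}\}\to x^*$ and show $x^*$ is stationary.

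First I would use the compactness of the feasible box $[l,u]$ to note that $\{f(x^k)\}$ is bounded below. The non-monotone stabilization strategy of Grippo, Lampariello and Lucidi~\cite{grippo1991class} then yields, by the classical arguments for such schemes, that $\lim_{k\to\infty}\alpha^k\|d^k\|=0$ and that the directional-derivative product $\alpha^k \nabla f(x^k)^T d^k$ vanishes in the limit along every subsequence where the Armijo-type acceptance test is enforced.

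Next I would combine Proposition~\ref{prop:estim} with the Gauss-Southwell-type working set rule. By Proposition~\ref{prop:estim}, for $k_j$ large enough the estimates $A_l^{k_j}$, $A_u^{k_j}$ are sandwiched between the strict and the weak active sets at $x^*$, so that any index at which one of the conditions in~\eqref{stationarity} could fail necessarily lies in $N^{k_j}$. Assume now, for contradiction, that $x^*$ is not stationary: then some index $i^*$ violates~\eqref{stationarity}, giving a uniform lower bound $|\nabla_{i^*}f(x^{k_j})|\ge \delta>0$ by continuity of $\nabla f$. The Gauss-Southwell rule used to pick $W^{k_j}$ guarantees that $W^{k_j}$ contains an index whose reduced-gradient magnitude is at least a fixed fraction of the maximum violation, hence bounded below. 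The first-order direction $d^{k_j}$ constructed on $W^{k_j}$ therefore has uniformly bounded norm and satisfies a uniform descent property $\nabla f(x^{k_j})^T d^{k_j}\le -c<0$.

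Finally I would derive a contradiction from the non-monotone stepsize rule. Either $\alpha^{k_j}$ is bounded below by some $\bar\alpha>0$, in which case the sufficient-decrease test forces $f$ to decrease by a uniform amount over each stabilization window, contradicting boundedness below; or $\alpha^{k_j}\to 0$, in which case the trial stepsize preceding acceptance violates the acceptance inequality, and letting $j\to\infty$ (using continuity of $\nabla f$ and boundedness of $d^{k_j}$) yields $\nabla f(x^*)^T d^*\ge 0$, contradicting the uniform descent. The main obstacle is the coordinate-decomposition aspect: one must show that the non-monotone scheme, which deliberately skips objective evaluations to save cost, remains compatible with the block updates restricted to $W^k$, i.e., that the greedy choice of $W^k$ produces enough descent in the full-vector sense to drive the classical Grippo-style argument through on the reference subsequence.
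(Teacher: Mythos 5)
Your overall architecture coincides with the paper's: a Grippo--Lampariello--Lucidi-type lemma for the non-monotone stabilization (Lemma~\ref{lemma:nonmonotone}), the limit $\alpha^k\norm{d^k}\to 0$ (Lemma~\ref{lemma:alphad_to_zero}), a mean-value/Armijo-failure contradiction when the directional derivatives do not vanish (Proposition~\ref{prop:lim_dir_der}), and finally the Gauss--Southwell rule together with the estimates~\eqref{active_set_estimates} and Proposition~\ref{prop:estim} to transfer stationarity from $W^k$ to all coordinates. The terminating case, the localization of any violated condition of~\eqref{stationarity} inside $N^k$, and the uniform lower bound on $\abs{\nabla_{\hat\imath^k}f(x^k)}$ extracted from the greedy rule are all correct and match what the paper does.

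The genuine gap is your treatment of the projection $[\cdot]^\sharp$. Both limits that you delegate to ``the classical arguments for such schemes'' are \emph{not} classical here, precisely because the iterate is $x^{k+1}=[x^k+\alpha^k d^k]^\sharp$ rather than $x^k+\alpha^k d^k$. First, $\norm{x^{k+1}-x^k}\to 0$ does not by itself yield $\alpha^k\norm{d^k}\to 0$: the projection can absorb an arbitrarily large step, and the paper's Lemma~\ref{lemma:alphad_to_zero} needs a coordinatewise sign analysis (for $i\in W^k\subseteq N^k$ with $\bar x_i=l_i$ the estimate~\eqref{Al} forces $\nabla_i f(\bar x)\le 0$, hence $\bar d_i\ge 0$, so the lower bound cannot truncate the step) to rule this out. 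Second, and more seriously, in your case ``$\alpha^{k_j}\to 0$'' the rejected trial point is $[x^{k}+\tfrac{\alpha^{k}}{\delta}d^{k}]^\sharp=x^{k}+\tfrac{\alpha^{k}}{\delta}d^{k}-y^{k}$; after the mean value theorem the correction contributes a term $\tfrac{\delta}{\alpha^{k}}\nabla f(z^{k})^\top y^{k}$ which does not obviously vanish, since $y^k$ and $\alpha^k$ tend to zero at comparable rates, so the conclusion $\nabla f(x^*)^\top\bar d\ge 0$ does not follow from continuity and boundedness alone. Establishing $\liminf \tfrac{\delta}{\alpha^k}\nabla f(z^k)^\top y^k\ge 0$ is the bulk of the paper's proof of Proposition~\ref{prop:lim_dir_der} and again hinges on the sign information encoded in the active-set estimates. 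Consequently, the obstacle you single out at the end (compatibility of the non-monotone scheme with the block updates) is actually the easy part, resolved exactly by your Gauss--Southwell argument; the missing idea is the sign analysis that neutralizes the projection correction terms in both limits.
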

\begin{proof}
See Appendix~\ref{appendix}.
\end{proof}

\subsection{A global optimization strategy for improving the modularity}\label{sub:Global}
Even though \FASTATVO{} is able to exploit the structure of the modularity total variation optimization problem \eqref{eq:optimization_STVQ},
we can only guarantee convergence to a stationary point. This is not surprising as optimizing the network modularity $Q$ is in general strongly NP-complete \cite{brandes2007finding} and thus, due to Theorem \ref{thm:main}, we cannot expect convergence to global minimizers of \eqref{eq:optimization_STVQ}. 
Although, in practice, stationary points obtained with \FASTATVO{} often provide quality results, we can employ a global optimization
strategy in order to further improve the quality of the final solution. Of course, this has an additional computational cost.

The arguably most elementary global optimization strategy one can use is \textit{Multistart},
which consists of repeatedly running a local optimization algorithm from some randomly chosen starting points and finally
picking the best one. This is the strategy that is used in \cite{tudisco2018community}.
In our case, a few experiments showed that Multistart is quite inefficient. Therefore, in order to get a better graph partition
without significantly increasing the CPU time, we propose here a slightly more sophisticated technique.
In particular, we adopted a form of iterated local search/basin hopping strategy (see, e.g., \cite{grosso2007population,leary2000global})
that we call Partition \& Swap (\texttt{PS}). The scheme of \texttt{PS} is reported in Algorithm~\ref{BH}.

\begin{algorithm}[h]
\caption{\texttt{PS}($x^0$)}
\label{BH}
\begin{algorithmic}\setlength{\itemsep}{2pt}
\scriptsize
\vspace{5pt}
\item[]$\,\,\,0$ Given a feasible point $x^0$
\item[]$\,\,\,1$ Set $\bar x^0 = \FASTATVO(x^0)$ and $k = 1$
\item[]$\,\,\,2$ For $k=1,2,\ldots$
\item[]$\,\,\,3$\hspace{15pt} Set $x^k= \SWAP(\bar x^{k-1})$
\item[]$\,\,\,4$\hspace{15pt} Set $y^k = \FASTATVO(x^k)$
\item[]$\,\,\,5$\hspace{15pt} If $TV_Q(y^k) < TV_Q(\bar x^{k-1})$ 
\item[]$\,\,\,6$\hspace{30pt} Set $\bar x^k = y^k$
\item[]$\,\,\,7$\hspace{15pt} Else
\item[]$\,\,\,8$\hspace{30pt} Set $\bar x^k =\bar x^{k-1}$
\item[]$\,\,\,9$\hspace{15pt} End if
\item[]$10$ End for
\vspace{5pt}
\end{algorithmic}
\end{algorithm}

We see that \texttt{PS} consists in applying, at each iteration, a perturbation to the current stationary point
(according to the \SWAP\ strategy described later) and starting \FASTATVO\ from the new perturbed point.
When \FASTATVO\ gets a better solution, in terms of modularity total variation, then $\bar x_k$ (the current best solution) is updated;
otherwise, $\bar x^k$ is left unchanged and the procedure is repeated until a maximum number
of iterations is reached.

For what concerns the \SWAP\ strategy, it is based on a simple idea of shifting a given percentage of variables in $\bar x^k$
to the bound with the opposite sign.
We illustrate the \SWAP{} scheme in Algorithm~\ref{alg:SWAP}.

\begin{algorithm}[h]
\caption{\texttt{SWAP}($x$)}
\label{alg:SWAP}
\begin{algorithmic}\setlength{\itemsep}{2pt}
\scriptsize
\vspace{5pt}
\item[]$0$ Given a feasible point $x$, choose $\sigma \in [0,100]$
\item[]$1$ Partition $\{1,\ldots,n\}$ in $I_l$ and $I_u$, such that $x_i \le 0$ if $i \in I_l$ and $x_i \ge 0$ if $i \in I_u$
\item[]$2$ Set $y = x$
\item[]$3$ Randomly pick $\sigma\%$ of indices from $I_l$ and set the corresponding components $y_i$ to $u_i$
\item[]$4$ Randomly pick $\sigma\%$ of indices from $I_u$ and set the corresponding components $y_i$ to $l_i$
\item[]$5$ Return $y$
\vspace{5pt}
\end{algorithmic}
\end{algorithm}

\subsection{Complexity analysis of function and gradient computation}\label{sub:complexity}
When solving problem~\eqref{prob}, a severe bottleneck for computational efficiency
is represented by the time spent for computing the objective function and its gradient,
since, as to be shown below, both of them have a cost that may grow quadratically with the problem dimension.
Therefore, a complexity analysis of these computations needs to be carry out
in order to understand the most efficient way to perform the above operations in practice.

Let $f(x)$ be the objective function of problem~\eqref{prob}.
Denoting by $M$ the symmetric matrix such that $M_{ij} = d_i d_j / \vol G - A_{ij}$ for all $i,j = 1,\ldots$,
we can express the objective function $f(x) = TV_Q^p(x)$ as
\begin{equation}\label{obj}
f(x) = \sum_{i=1}^{n-1} \sum_{j=i+1}^n M_{ij} \abs{x_i-x_j}^p
\end{equation}
and its gradient as
\begin{equation}\label{grad}
\nabla_i f(x) = p \sum_{\substack{j=1 \\ j \ne i}}^n M_{ij} \, \sign{(x_i-x_j)} \abs{x_i-x_j}^{p-1}, \quad i = 1,\ldots,n.
\end{equation}
We see that every function evaluation and every gradient evaluation have a cost
(in terms of number of arithmetic operations)
\begin{equation}\label{cost_grad}
\mathcal O\biggl(\dfrac{n (n-1)}2\biggr),
\end{equation}
respectively.
In our implementation of \FASTATVO, we use some tricks to reduce these costs by exploiting the structure of $f$ and $\nabla f$.
For convenience of exposition, we first focus on the gradient computation.

Given any iterate $x^k$, assume that the working set $W^k$ has been chosen and the successive iterate $x^{k+1}$ has been produced
(as $x^{k+1} = x^k + \alpha^k d^k$, with $\alpha^k > 0$ and $d^k_i = 0$ for all $i \notin W^k$).
For any index $i = 1,\ldots,n$, let us define the two functions
\begin{gather*}
\phi_i(x) = p \sum_{\substack{j=1 \\ j \ne i \\ j \in W^k}}^n M_{ij} \, \sign{(x_i-x_j)} \abs{x_i-x_j}^{p-1}, \\
\varrho_i(x) = p \sum_{\substack{j=1 \\ j \ne i \\ j \notin W^k}}^n M_{ij} \, \sign{(x_i-x_j)} \abs{x_i-x_j}^{p-1}.
\end{gather*}
We see that $\phi_i(x)$ and $\varrho_i(x)$ differ only in that the summation is over $j \in W^k$ in the first function
and over $j \notin W^k$ in the second function.
Using~\eqref{grad},
clearly we have $\nabla_i f(x) = \phi_i(x) + \varrho_i(x)$ for all $i = 1,\ldots,n$.
Now, consider any index $h \notin W^k$. Since $x^{k+1}_h = x^k_h$, we can write
\begin{equation*}
\nabla_h f(x^{k+1}) = \phi_h(x^{k+1}) + \varrho_h(x^k) = \phi_h(x^{k+1}) + \nabla_h f(x^k) - \phi_h(x^k).
\end{equation*}
In other words, we can obtain $\nabla_h f(x^{k+1})$ by computing $\phi_h(x^{k+1})$ and $\phi_h(x^k)$
(since $\nabla_h f(x^k)$ is known at the currrent iteration).
So, we can get $\nabla f(x^{k+1})$ as follows:
\begin{equation}\label{grad_fast}
\nabla_i f(x^{k+1}) =
\begin{cases}
\phi_i(x^{k+1}) + \varrho_i(x^{k+1}), \quad               & i \in W^k \\
\phi_i(x^{k+1}) + \nabla_i f(x^k) - \phi_i(x^k), \quad & i \notin W^k.
\end{cases}
\end{equation}

Now, we are interested in the cost of computing $\nabla f$ by~\eqref{grad_fast} in order to compare this cost with the one needed by~\eqref{grad}.
First, we see that~\eqref{grad_fast} requires to compute $\phi_i(x^{k+1})$ for all $i \in W^k$.
The cost of this operation is the order of the number of pairs of indices belonging to $W^k$, that is
\begin{equation*}
\mathcal O\biggl(\dfrac{\abs{W^k} (\abs{W^k}-1)}2\biggr).
\end{equation*}
Then, we have the cost of computing $\varrho_i(x^{k+1})$ for all $i \in W^k$ and $\phi_i(x^{k+1})$ for all $i \notin W^k$,
which is the order of the number of pairs of indices such that exactly one of them belongs to $W^k$.
The number of these pairs is equal to $\abs{W^k}(n-\abs{W^k})$, obtained as
$n(n-1)/2$ (total number of pairs of indices) minus $\abs{W^k}(\abs{W^k}-1)/2$ (number of pairs of indices belonging to $W^k$)
minus $(n-\abs{W^k})(n-\abs{W^k}-1)/2$ (number of pairs of indices not belonging to $W^k$).
Therefore, the cost of computing $\varrho_i(x^{k+1})$ for all $i \in W^k$ and $\phi_i(x^{k+1})$ for all $i \notin W^k$ is
\begin{equation*}
\mathcal O\bigl(\abs{W^k}(n-\abs{W^k})\bigr).
\end{equation*}
The last cost in~\eqref{grad_fast} is the one for computing $\phi_i(x^k)$ for all $i \notin W^k$. This cost is still
the order of the number of pairs of indices such that exactly one of them belongs to $W^k$, that is,
$\mathcal O\bigl(\abs{W^k}(n-\abs{W^k})\bigr)$.

Summing all up, we get that computing $\nabla f$ as in~\eqref{grad_fast} has a cost of the order of
$\abs{W^k} (\abs{W^k}-1)/2 + 2\abs{W^k}(n-\abs{W^k})$, that is,
\begin{equation}\label{cost_grad_fast}
\mathcal O\biggl(\dfrac{\abs{W^k}(4n-3\abs{W^k}-1)}2\biggr).
\end{equation}

Comparing~\eqref{cost_grad_fast} with~\eqref{cost_grad}, we can conclude that applying~\eqref{grad_fast} is more convenient than~\eqref{grad} if
$\abs{W^k}(4n-3\abs{W^k}-1) < n (n-1)$. It is straightforward to verify that this inequality is satisfied if
\begin{equation*}
\abs{W^k} < \frac{n-1}3.
\end{equation*}
So, in our implementation of \FASTATVO, at each iteration we use either~\eqref{grad} or~\eqref{grad_fast} to compute $\nabla f(x^k)$,
depending on the dimension of $\abs{W^k}$.
Let us also point out that this complexity analysis shows that a decomposition approach, such as the one used in \FASTATVO,
can be particularly well suited for solving problem~\eqref{prob}.

For what concerns the computation of the objective function, we may use similar arguments as above to show that a cost lower than $\mathcal O(n(n-1)/2)$
can be obtained if $\abs{W^k}$ is sufficiently small.
Though, in our implementation of \FASTATVO\ we use the following formula to compute the objective function:
\begin{equation}\label{f_fast}
f(x) = \frac{\nabla f(x)^T x}p.
\end{equation}
Since we need to compute $\nabla f(x^k)$ at every iteration of \FASTATVO, this choice seems particularly convenient as the cost of~\eqref{f_fast}
is linear once the gradient is known.
The only drawback is that, at a given iteration $k$, we may compute the objective function many times before producing the successive $x^{k+1}$:
in this case we would compute many gradients that are not needed.
But this can only occur if the unit stepsize is not accepted in the line search procedure (see Algorithm~\ref{alg:FAST_ATVO} in Appendix~\ref{appendix}),
and our experiments showed that this is an unlikely event.

\section{Experiments}\label{sec:experiments}
In this section we apply our method to several real-world and established random benchmark networks with the aim of highlighting the improvements that the exact modularity total variation approach here proposed ensures over the standard linear method  \cite{newman2006finding} and the  nonlinear eigenvector method previously proposed in \cite{tudisco2018community}.

We subdivide the discussion in two parts:
first,
in Subsection \ref{sec:implementation_details} we provide details on the parameters used in our implementation of \FASTATVO{}; then, in Subsection \ref{sec:results} we report extensive numerical results on sixteen datasets.

\subsection{Implementation details}\label{sec:implementation_details}
In this subsection, we provide details on the parameters used in our experiments to approximate problem~\eqref{eq:optimization_STVQ}
and to execute \FASTATVO.

For what concerns problem~\eqref{eq:optimization_STVQ}, we used $p = 1.4$ to define the function $TV_Q^p(x)$. This choice of $p$ has been guided by an extensive parameter--tuning phase and it has been chosen because it performed best overall on the datasets we have analyzed. As for the choice of the box constraints, we set $a=b=1$ to have a more fair comparison with the Generalized RatioDCA approach, which was designed to solve that particular problem setting.

For what concerns~\FASTATVO, with reference to Algorithm~\ref{alg:FAST_ATVO} reported in Appendix~\ref{appendix}, we set
$Z = 20$, $M = 100$, $\Delta_0 = 1\text e20$, $\beta = 0.99$, $\delta = 0.5$ and $\gamma = 1$e$-3$.
The working set $W^k$ is computed randomly at every iteration (except for the index $\hat \imath^k$ that is always included in $W^k$)
and its dimension starts from $2$ and is gradually increased through the iterations up to $\max(10,\min(1000,0.03 n))$.
The search direction is computed as explained in Appendix~\ref{appendix}, using $\mu_\text{min} = 1$e$-10$ and $\mu_\text{max} = 1$e$10$.

Moreover, we use the following initialization strategy, that our preliminary experiments suggested to be useful for improving performances in practice:
given any starting point $x^0$, all negative (positive) components of $x^0$ are set to the lower (upper) bound.

Finally, in the global optimization strategy described in Subsection~\ref{sub:Global} and Algorithm \ref{alg:SWAP} we set $\sigma = 75$, which was found to give good results.

All the parameters used in the experiments have been chosen after a parameter--tuning phase performed over the analyzed data sets.
So, in our experience, these parameters can be considered a good general setting.

\subsection{Results}\label{sec:results}
In this section we apply our method to several real-world and random benchmark networks with the aim of highlighting: (a) the improvements that \FASTATVO{} ensures with respect to the previously proposed Generalized RatioDCA and (b) the overall improvement that the continuous modularity total variation approach we propose here ensures over the standard linear relaxation approach.

In particular, in our experiments we first tested the local optimization method \FASTATVO, described in Subsection~\ref{sub:LocalSearch},
and then the global optimization method \texttt{PS}, described in Subsection~\ref{sub:Global}.
We considered more than $30$ datasets, but, for the sake of brevity, here we report results for $16$ moderate-to-large size networks, ranging from around $7000$ up to around $65000$ nodes. The results we obained on the remaining datasets are aligned with the ones presented here. The list of the $16$ datasets is shown in Table \ref{tab:datasets}.
All the datasets are publicly available and all the methods are implemented in \texttt{C++}. The software for implementing \FASTATVO{} is available at
\begin{center}
\tt \url{https://github.com/acristofari/fast-atvo}
\end{center}

\begin{table}[t]
\centering
\caption{List of datasets used in the experiments, with corresponding sizes (in terms of number of nodes) and references. }
{\scriptsize
\begin{tabular}{|l|l|l|l|}
\hline
\textbf{Dataset ID} & \textbf{Dataset name} & \textbf{\# nodes} & \textbf{Reference}\bigstrut[t] \bigstrut[b] \\
\hline
   1 & \verb!geom!                            & 7343  & \cite{pajek_datasets} \bigstrut[t] \\
   2 & \verb!as-735!                          & 7716  & \cite{as_datasets}\\
   3 & \verb!ca-HepTh!                        & 9877  & \cite{ca_datasets}\\
   4 & \verb!vsp_c-30_data_data!              & 11023 & \cite{dimacs_datasets}\\
   5 & \verb!Oregon-1!                        & 11492 & \cite{as_datasets}\\
   6 & \verb!ca-HepPh!                        & 12008 & \cite{pajek_datasets}\\
   7 & \verb!vsp_befref_fxm_2_4_air02!        & 14109 & \cite{dimacs_datasets}\\
   8 & \verb!ca-AstroPh!                      & 18772 & \cite{ca_datasets}\\
   9 & \verb!ca-CondMat!                      & 23133 & \cite{ca_datasets}\\
  10 & \verb!rgg_n_2_15_s0!                   & 32768 & \cite{dimacs_datasets}\\
  11 & \verb!vsp_sctap1-2b_and_seymourl!      & 40174 & \cite{dimacs_datasets}\\
  12 & \verb!vsp_model1_crew1_cr42_south31!   & 45101 & \cite{dimacs_datasets}\\
  13 & \verb!Words28!                          & 52652 & \cite{pajek_datasets}\\
  14 & \verb!vsp_bump2_e18_aa01_model1_crew1! & 56438 & \cite{dimacs_datasets}\\
  15 & \verb!loc-Brightkite!                  & 58228 & \cite{brightkite_datasets}\\
  16 & \verb!rgg_n_2_16_s0!                   & 65536 & \cite{dimacs_datasets}\bigstrut[b] \\
\hline
\end{tabular}
}
\label{tab:datasets}
\end{table}

The networks we selected  are primarily real networks, plus several randomly generated graphs borrowed from established benchmarks.
In particular: \verb!geom! is a collaboration network in computational geometry \cite{pajek_datasets}; \verb!as-735! and \verb!Oregon-1! are snapshots of Internet at the level of autonomous systems \cite{as_datasets}; \verb!ca-HepTh!, \verb!ca-HepPh!, \verb!ca-AstroPh! and \verb!ca-CondMat! are collaboration networks based on arXiv papers in High Energy Physics, Astro Physics and Condensed matter, respectively \cite{ca_datasets};
\verb!rgg_n_2_15_s0! and \verb!rgg_n_2_16_s0! are random geometric graph with $2^{15}$ (resp.\ $2^{16}$) vertices used for the DIMACS10 challenge on graph clustering and partitioning \cite{dimacs_datasets};
 \verb!Words28! is a dictionary graph built using a collection of close words in modern English \cite{pajek_datasets}; \verb!vsp_c-30_data_data!, \verb!vsp_befref_fxm_2_4_air02!,  \verb!vsp_sctap1-2b_and_seymourl!, \verb!vsp_model1_crew1_cr42_south31! and \verb!vsp_bump2_e18_aa01_model1_crew1! are DIMACS10 star--like random benchmark graphs consisting of a mixture of  social networks, finite-element graphs, VLSI chips, peer-to-peer networks and graphs from optimization solvers \cite{dimacs_datasets}; \verb!loc-Brightkite! is a
Location-based social network \cite{brightkite_datasets}.

Performance results are shown in Tables \ref{tab:results_local_1} and \ref{tab:results_local_2} and Figure \ref{fig:boxplot_local}.
Table \ref{tab:results_local_1} compares the modularity value obtained with \FASTATVO{}, with Generalized RatioDCA and with the standard linear spectral method. The results shown in this table have been obtained using the leading eigenvector of the modularity matrix (what we call \textit{linear eigenvector}) as a starting point of both \FASTATVO{} and Generalized RatioDCA algorithms. From this table we can see that \FASTATVO{} generally performs best with improvements in terms of modularity value up to 140\% more than the linear eigenvector (for \verb!ca-HepTh! dataset)  and up to 97\% more than the Generalized RatioDCA method (for \verb!vsp_befref_fxm_2_4_air02! dataset). Moreover, note that the communities identified by each method significantly different. This is also highlighted in Table \ref{tab:results_local_1}, where we compare the sizes of such communities.

\begin{table}[t!]
\centering
\caption{Performance comparison using the linear eigenvector as starting point. The values shown here correspond to the value of $Q(S^*)$
and the size of $S^*$, where  $S^*$ is the community identified by optimal thresholding the output of the three methods:
linear (\textbf{Q$_\text{linear}$} and \textbf{SIZE$_\text{linear}$}, respectively),
Generalized RatioDCA ($\textbf{Q}_\textbf{R}$ and \textbf{SIZE$_\text{R}$}, respectively) and
\FASTATVO{} ($\textbf{Q}_\textbf{F}$ and \textbf{SIZE$_\text{F}$}, respectively).
In the columns corresponding to the size of $S^*$ we also show, between round brackets, the percentage of nodes in $S^*$ as compared with the whole set of nodes.}
{\scriptsize
{\begin{tabular}{| c | c c c | c c c | c c |}
\hline
\textbf{Dataset} \bigstrut[t]
& \multirow{2}*{\textbf{Q$_\text{linear}$}}
& \multirow{2}*{\textbf{Q$_\text{R}$}}
& \multirow{2}*{\textbf{Q$_\text{F}$}}
& \multirow{2}*{\textbf{SIZE$_\text{linear}$}}
& \multirow{2}*{\textbf{SIZE$_\text{R}$}}
& \multirow{2}*{\textbf{SIZE$_\text{F}$}}
& \multirow{2}*{$\dfrac{\textbf Q_\textbf{F}}{\textbf Q_\textbf{linear}}$}
& \multirow{2}*{$\dfrac{\textbf Q_\textbf{F}}{\textbf Q_\textbf{R}}$} \bigstrut[b] \\
\textbf{ID} & & & & & & & &  \bigstrut[b] \\
\hline
1 &  0.28 &  0.29 &  0.40 & 3522 (48\%) & 3128 (43\%) & 2467 (34\%) &  1.46 &  1.40 \bigstrut[t] \\
2 &  0.20 &  0.25 &  0.37 & 3072 (40\%) & 3530 (46\%) & 3187 (41\%) &  1.89 &  1.51 \\
3 &  0.17 &  0.39 &  0.40 & 1517 (15\%) & 3624 (37\%) & 3324 (34\%) &  2.40 &  1.03 \\
4 &  0.33 &  0.41 &  0.47 & 5285 (48\%) & 3673 (33\%) & 4920 (45\%) &  1.42 &  1.17 \\
5 &  0.28 &  0.31 &  0.39 & 4448 (39\%) & 5051 (44\%) & 5494 (48\%) &  1.38 &  1.25 \\
6 &  0.35 &  0.36 &  0.41 & 1117 (9\%) & 971 (8\%) & 1166 (10\%) &  1.16 &  1.15 \\
7 &  0.21 &  0.21 &  0.42 & 3346 (24\%) & 3346 (24\%) & 6738 (48\%) &  1.97 &  1.97 \\
8 &  0.25 &  0.33 &  0.36 & 4186 (22\%) & 7448 (40\%) & 6334 (34\%) &  1.44 &  1.07 \\
9 &  0.23 &  0.38 &  0.39 & 8105 (35\%) & 10801 (47\%) & 6307 (27\%) &  1.68 &  1.04 \\
10 &  0.32 &  0.44 &  0.50 & 11613 (35\%) & 14228 (43\%) & 16333 (50\%) &  1.55 &  1.12 \\
11 &  0.23 &  0.23 &  0.39 & 16754 (42\%) & 16754 (42\%) & 6314 (16\%) &  1.70 &  1.70 \\
12 &  0.26 &  0.26 &  0.40 & 6176 (14\%) & 6176 (14\%) & 13438 (30\%) &  1.56 &  1.56 \\
13 &  0.32 &  0.46 &  0.43 & 7565 (14\%) & 11538 (22\%) & 23620 (45\%) &  1.32 &  0.93 \\
14 &  0.27 &  0.35 &  0.44 & 8289 (15\%) & 9013 (16\%) & 22045 (39\%) &  1.66 &  1.25 \\
15 &  0.20 &  0.35 &  0.37 & 4495 (8\%) & 24931 (43\%) & 18081 (31\%) &  1.83 &  1.07 \\
16 &  0.31 &  0.44 &  0.50 & 23054 (35\%) & 28106 (43\%) & 32600 (50\%) &  1.60 &  1.14 \bigstrut[b] \\
\hline
\end{tabular}}
}
\label{tab:results_local_1}
\end{table}

The stochastic nature of the working set selection combined with the non-monotone stabilization strategy we use to generate the new iterates are major differences between the proposed \FASTATVO{} and the Generalized RatioDCA method. In our opinion, these two features allow the method to escape from ``bad'' local optima and quickly find much better modules. In fact, Generalized RatioDCA seems to often get trapped into a local optima somewhat near the initial linear eigenvector assignment, while \FASTATVO{} is able to efficiently move further away. The phenomenon is particularly evident for the datasets $\{1,6,7,11,12\}$, as shown in Table \ref{tab:results_local_1}, and it is further highlighted in the example drawing of Figure \ref{fig:community_example}.

\begin{figure}[t!]
\centering
\subfloat[Boxplots for graphs with less than 20000 nodes.]
{\includegraphics[scale=0.6]{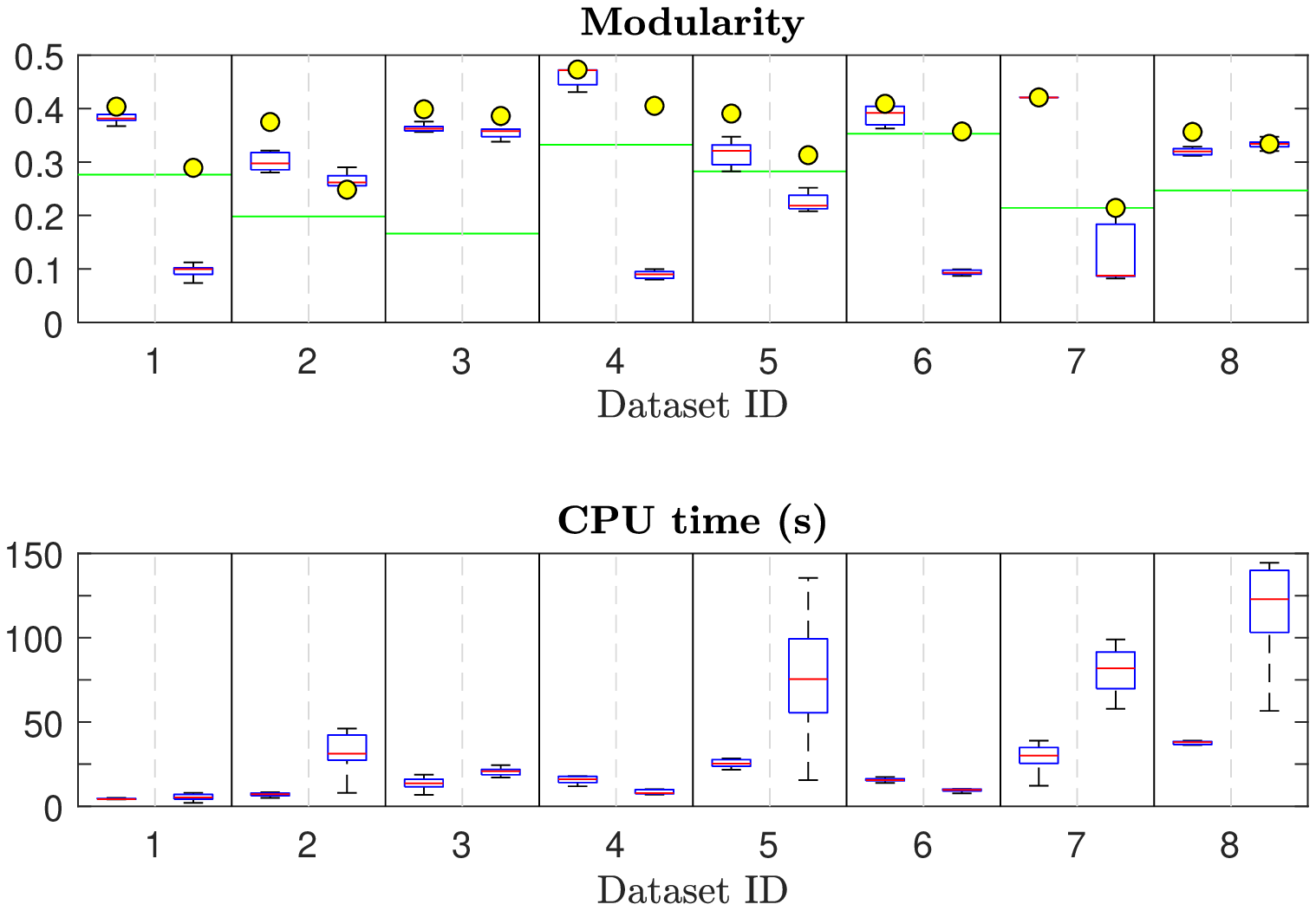}\label{subfig:boxplot_local_1}}

\subfloat[Boxplots for graphs with more than 20000 nodes.]
{\includegraphics[scale=0.6]{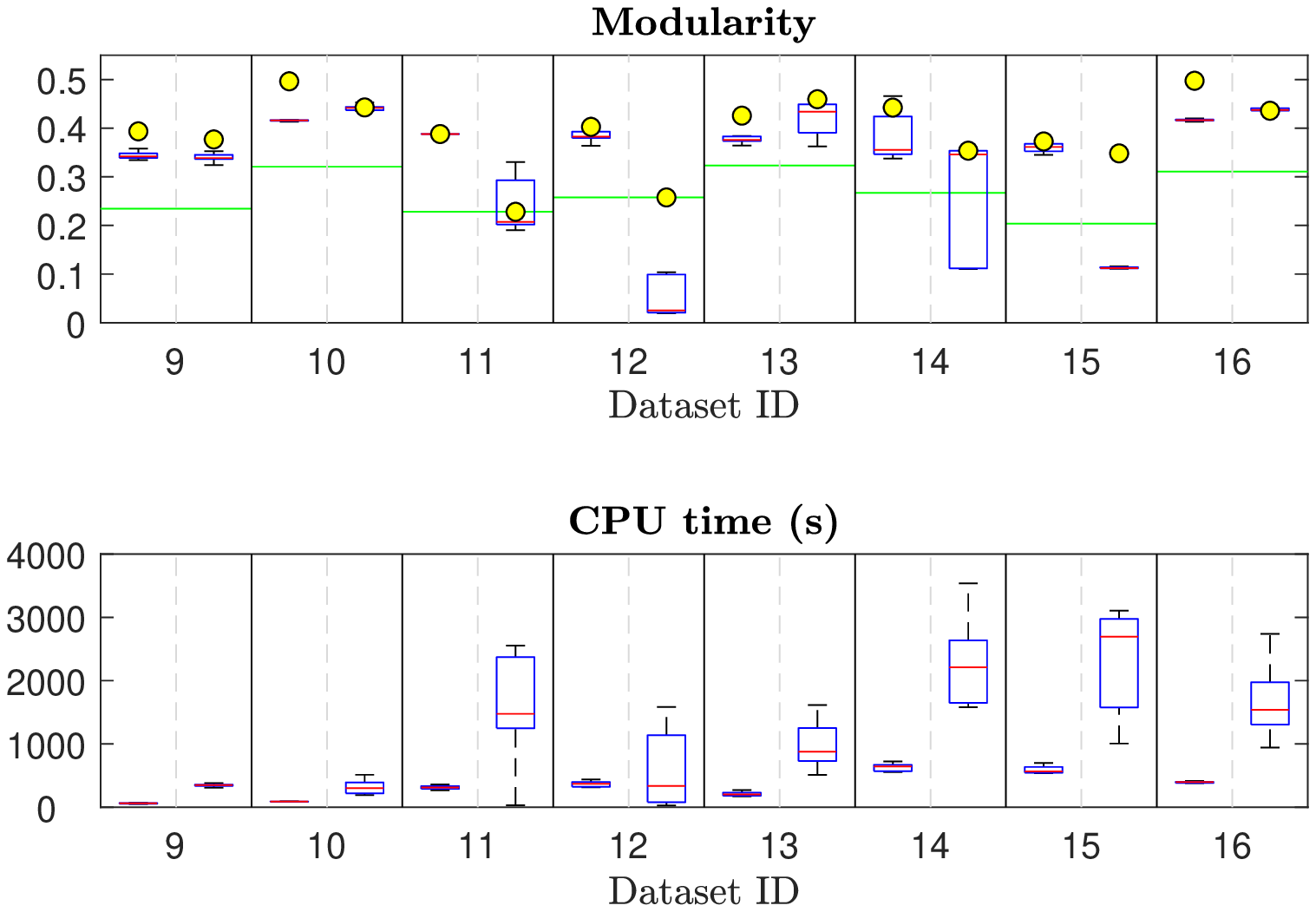}\label{subfig:boxplot_local_2}}
\caption{Boxplots of modularity values (upper panels) and execution times (lower panels) for \FASTATVO{} (left columns) and for Generalized RatioDCA (right columns).
Outliers were removed from the boxplots. The green lines are the values of modularity obtained using the linear method,
whereas the yellow dots are the values of modularity obtained using the linear eigenvector as starting point.}\label{fig:boxplot_local}
\end{figure}

\begin{figure}[t]
\centering
\subfloat[Linear]
{\includegraphics[scale=0.56, trim = 4.5cm 2cm 3cm 1.5cm, clip]{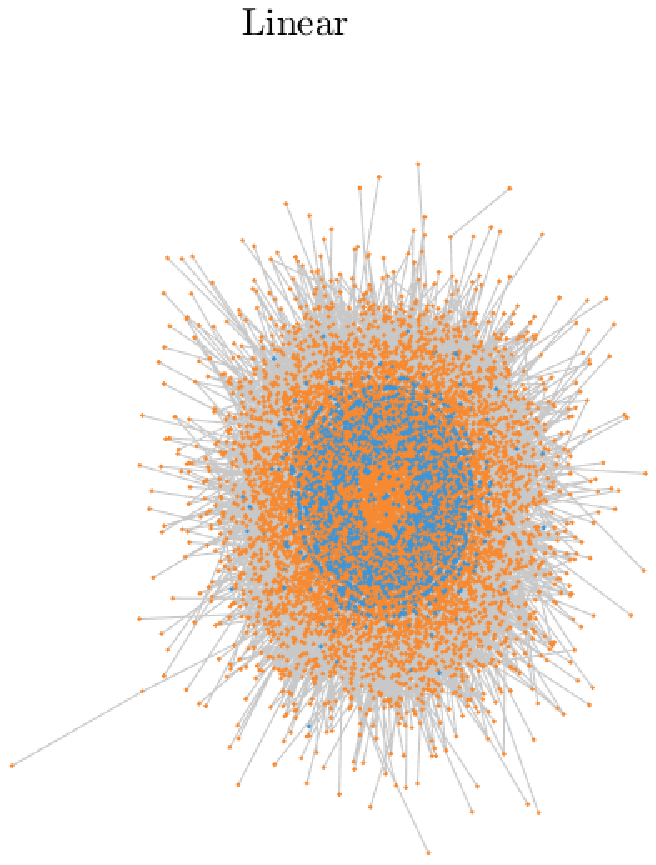}} \;
\subfloat[Generalized RatioDCA]
{\includegraphics[scale=0.56, trim = 4.5cm 2cm 3cm 1.5cm, clip]{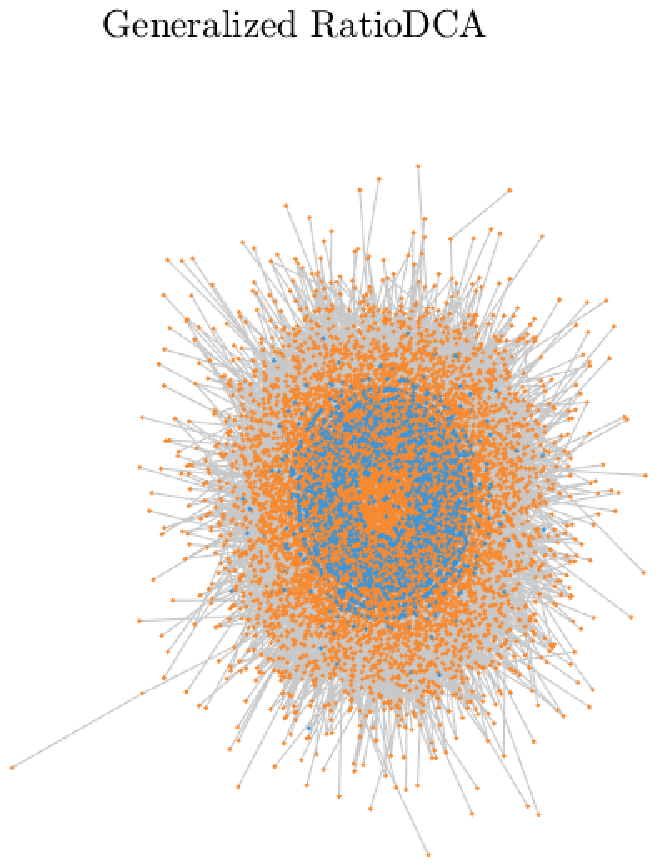}} \;
\subfloat[\FASTATVO]
{\includegraphics[scale=0.56, trim = 4.5cm 2cm 3cm 1.5cm, clip]{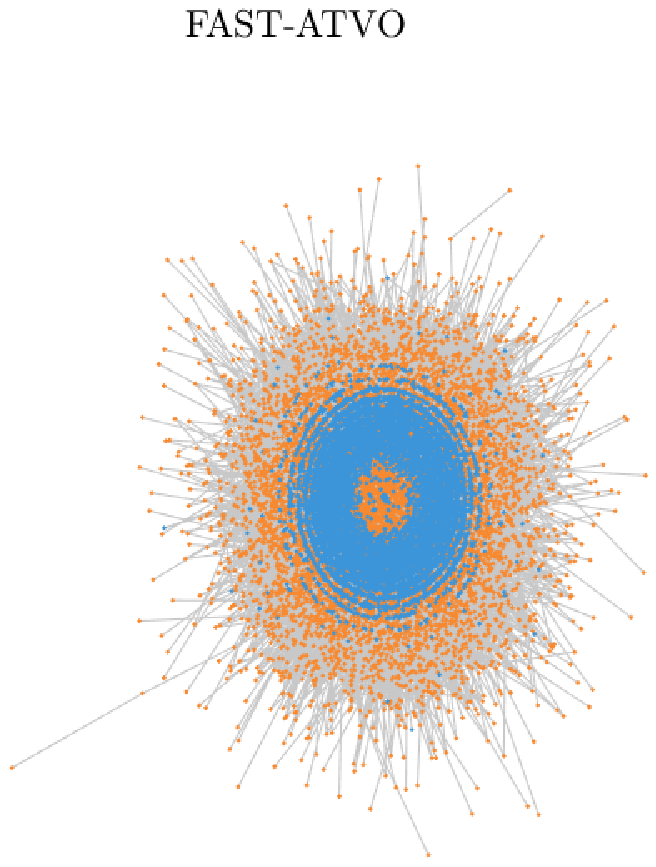}}
\caption{Graph drawing of the leading module obtained on the dataset number $7$ for the linear method (left),  Generalized RatioDCA (center) and the proposed \FASTATVO{} (right). In this case, Generalized RatioDCA returns a solution very close to the linear eigenvector assignment.
This might indicate that Generalized RatioDCA gets trapped in the basin of attraction of a local minimum.
\FASTATVO{}, instead, is able to escape that point and to find a much better leading module assignment.}\label{fig:community_example}
\end{figure}

Table \ref{tab:results_local_2} shows mean and standard deviations of the modularity values obtained by
\FASTATVO{} and  Generalized RatioDCA over 10 runs, each with a different randomly chosen starting point. This table confirms the behavior observed in Table \ref{tab:results_local_1}: the average modularity value obtained with \FASTATVO{} highly outperforms both the linear method and  Generalized RatioDCA. Moreover, the modularity standard deviations shown in the table demonstrate that \FASTATVO{} is generally more robust than Generalized RatioDCA.

\begin{table}[t!]
\centering
\caption{Performance comparison using random starting points. The values shown here correspond to the mean and standard deviation of $Q(S^*)$, where  $S^*$ is the community identified by optimal thresholding the output of the three methods: linear (\textbf{Q$_\text{linear}$}), Generalized RatioDCA ($\textbf{Q}_\textbf{R}$) and \FASTATVO{} ($\textbf{Q}_\textbf{F}$).}
{\scriptsize
{\begin{tabular}{| c | c c c c c | c c |}
\hline
\textbf{Dataset} \bigstrut[t]
& \multirow{2}*{\textbf{Q$_\text{linear}$}}
& \textbf{avg} & \textbf{std}
& \textbf{avg} & \textbf{std}
& \multirow{2}*{$\dfrac{\textbf{avg Q}_\textbf{F}}{\textbf Q_\textbf{linear}}$}
& \multirow{2}*{$\dfrac{\textbf{avg Q}_\textbf{F}}{\textbf{avg Q}_\textbf{R}}$} \bigstrut[b] \\
\textbf{ID} & & \textbf{Q$_\text{R}$} & \textbf{Q$_\text{R}$}
& \textbf{Q$_\text{F}$} & \textbf{Q$_\text{F}$} & & \bigstrut[b] \\
\hline
  1 &  0.28 &  0.11 &  0.06 &  0.38 &  0.01 &  1.39 &  3.41 \bigstrut[t] \\
  2 &  0.20 &  0.27 &  0.01 &  0.31 &  0.03 &  1.56 &  1.16 \\
  3 &  0.17 &  0.36 &  0.01 &  0.37 &  0.01 &  2.20 &  1.03 \\
  4 &  0.33 &  0.12 &  0.10 &  0.45 &  0.04 &  1.35 &  3.84 \\
  5 &  0.28 &  0.23 &  0.03 &  0.32 &  0.03 &  1.13 &  1.39 \\
  6 &  0.35 &  0.12 &  0.08 &  0.39 &  0.02 &  1.10 &  3.31 \\
  7 &  0.21 &  0.15 &  0.12 &  0.40 &  0.07 &  1.87 &  2.61 \\
  8 &  0.25 &  0.33 &  0.01 &  0.32 &  0.01 &  1.31 &  0.97 \\
  9 &  0.23 &  0.34 &  0.01 &  0.35 &  0.02 &  1.48 &  1.02 \\
 10 &  0.32 &  0.44 &  0.01 &  0.42 &  0.02 &  1.32 &  0.95 \\
 11 &  0.23 &  0.24 &  0.05 &  0.38 &  0.01 &  1.69 &  1.60 \\
 12 &  0.26 &  0.07 &  0.07 &  0.38 &  0.01 &  1.49 &  5.89 \\
 13 &  0.32 &  0.42 &  0.03 &  0.38 &  0.02 &  1.18 &  0.91 \\
 14 &  0.27 &  0.24 &  0.13 &  0.38 &  0.05 &  1.42 &  1.57 \\
 15 &  0.20 &  0.13 &  0.07 &  0.36 &  0.01 &  1.77 &  2.69 \\
 16 &  0.31 &  0.44 &  0.01 &  0.42 &  0.02 &  1.36 &  0.96 \bigstrut[b] \\
\hline
\end{tabular}}
}
\label{tab:results_local_2}
\end{table}

The results of Tables  \ref{tab:results_local_1} and \ref{tab:results_local_2} are summarized and shown together in the boxplots of Figure \ref{fig:boxplot_local}
(removing outliers), where we also compare the execution times of the methods. The upper panels in Subfigures~\ref{subfig:boxplot_local_1} and~\ref{subfig:boxplot_local_2} show medians and quartiles of the modularity values obtained with \FASTATVO{} (left columns) and Generalized RatioDCA (right columns) over 10 runs with random starting points, together with the value obtained using the linear eigenvector as a starting point (yellow dot). A straight green line shows, instead, the modularity value obtained with the linear spectral method. The lower panels in the subfigures show  median and quartiles of execution times of \FASTATVO{} and Generalized RatioDCA. As for the modularity values, we can see that \FASTATVO{} is generally more efficient (as it requires a smaller median execution time) and more robust (as the variance of the CPU time is in general remarkably smaller).

Finally, we report in Figure \ref{fig:plot_global} the performance comparison among \texttt{PS} framework,
\FASTATVO{} and Generalized RatioDCA.
For the three methods we use the linear eigenvector as starting point.
As we can easily see, the use of a global optimization strategy improves modularity values with respect to \FASTATVO{},
while guaranteeing good performances in terms of CPU time.
In particular, we notice that \texttt{PS} gives a modularity value higher than Generalized RatioDCA even for dataset \verb!Words28!,
which was the only one where \FASTATVO\ was outperformed by Generalized RatioDCA.

\begin{figure}[t]
\centering
\includegraphics[width=.7\textwidth, trim = 0cm 3.5cm 0cm 0cm, clip]{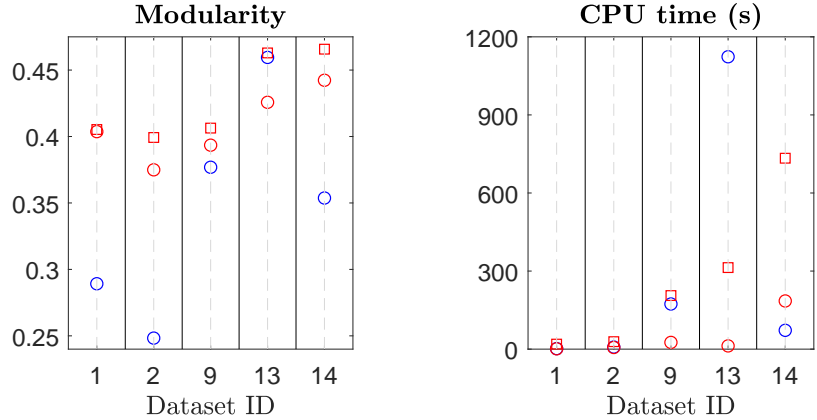}
\caption{Modularity values (left panel) and execution times (right panel) for \texttt{PS} (red squares)
\FASTATVO{} (red circles) and Generalized RatioDCA (blue circles). Linear eigenvector is used as starting point for all methods.}
\label{fig:plot_global}
\end{figure}

\section{Conclusions}\label{conclusions}
In this paper, we described a new modularity total variation approach for the leading community problem. We first proved that
the original combinatorial problem is equivalent to the minimization of a suitably chosen continuous nonsmooth function (the modularity total variation) over a box.
We then considered a smooth approximation of the continuous problem  and developed an algorithmic framework to efficiently tackle it.
The reported results show that the proposed method outperforms both Generalized RatioDCA and the linear method in terms of modularity value. Furthermore,
the CPU time required to find those good solutions is usually significantly smaller than the CPU time needed to run Generalized RatioDCA.
Hence, even if the modularity landscape is usually fraught with local optima (see, e.g., \cite{good2010performance}), the proposed
approach quickly finds solutions with good modularity values in the end. As far as we can see, the use of both the non-monotone line search and the randomized working set
selection seems to be of help in avoiding/escaping ``bad'' local optima.

Future work might focus on  adapting our strategy to efficiently handle applications where dense weighted graphs (such as similarity or correlation graphs) or multilayer graphs are required.
Another interesting research direction might be analyzing how the described approach transfers and applies to other community detection techniques like, e.g., Belief Propagation \cite{zhang2014scalable}, Stochastic Block Models \cite{abbe2017community} and Infomap~\cite{rosvall2009map}.

\appendix
\section{Detailed scheme and convergence analysis of \FASTATVO}\label{appendix}

In this appendix, we report the detailed scheme of \FASTATVO\ (a short scheme was given in Algorithm~\ref{alg:FAST_ATVO_SHORT}),
together with its convergence analysis. The detailed scheme is reported below in Algorithm~\ref{alg:FAST_ATVO}.

\begin{algorithm}[h!]
\caption{\FASTATVO($x^0$) -- detailed scheme}
\label{alg:FAST_ATVO}
\begin{algorithmic}\setlength{\itemsep}{2pt}
\scriptsize
\vspace{5pt}
\item[]$\,\,\,0$ Given a feasible point $x^0$, fix $Z\ge 1$, $M\ge 0$, $\Delta_0\ge0$, $\beta\in (0,1)$, $\delta\in (0,1)$,$\gamma\in (0,1)$, \\
            $\,\,\,\,\,\,$ $0 < \mu_\text{min} \le \mu_\text{max} < \infty$ and set $k = 0$,  $j = 0$, $l^0 = 0$, $f_R^0 = f^0 = f(x^0)$, $\Delta = \Delta_0$
\vspace{5pt}
\item[]$\,\,\,1$ While $x^k$ is a non-stationary point for problem~\eqref{prob}
\vspace{5pt}
\item[]$\,\,\,\,\,\,$\hspace{15pt} \textbf{Active and non-active set estimate}
\item[]$\,\,\,2$\hspace*{0.6truecm} Compute $A_l^k=A_l(x^k)$, $A_u^k=A_u(x^k)$ and $N^k=N(x^k)$
\vspace{5pt}
\item[]$\,\,\,\,\,\,$\hspace{15pt} \textbf{Function control every Z iterations}
\item[]$\,\,\,3$\hspace{15pt} If $k = l^j + Z$, then compute $f(x^k)$
\item[]$\,\,\,4$\hspace{30pt} If $f(x^k) \ge f^j_R$
\item[]$\,\,\,5$\hspace{45pt} Backtrack to $x^{l^j}$, set $d^k = d^{l^j}$, $k = l^j$ and go to step~22
\item[]$\,\,\,6$\hspace{30pt} Else
\item[]$\,\,\,7$\hspace{45pt} Set $j = j + 1$, $l^j = k$, $f^j = f(x^k)$ and $\displaystyle{f_R^j = \max_{0 \le i \le \min\{j,M\}} f^{j-i}}$
\item[]$\,\,\,8$\hspace{30pt} End if
\item[]$\,\,\,9$\hspace{15pt} End if
\vspace{5pt}
\item[]$\,\,\,\,\,\,$\hspace{15pt} \textbf{Computation of the search direction}
\item[]$10$\hspace{15pt} Find $\hat \imath^k \in \argmax_{i \in N^k} \abs{x^k_i - [x^k-\nabla f(x^k)]^{\sharp}_i}$
\item[]$11$\hspace{15pt} Choose a non-empty working set $W^k \subseteq N^k$ such that $\hat \imath^k \in W^k$
\item[]$12$\hspace{15pt} Set $d^k_{A_l^k} = 0$, $d^k_{A_u^k} = 0$, $d^k_{N^k \setminus W^k} = 0$ and set $d^k_{W^k} = -\frac 1{\mu^k} \nabla_{W^k} f(x^k)$,
                         with $\mu^k \in [\mu_\text{min}, \mu_\text{max}]$
\vspace{5pt}
\item[]$\,\,\,\,\,\,$\hspace{15pt} \textbf{Test for accepting the unit stepsize}
\item[]$13$\hspace{15pt} If $\norm{[x^k + d^k]^{\sharp}} \le \Delta$
\item[]$14$\hspace{30pt} Set $x^{k+1}=[x^k + d^k]^{\sharp}$, $\Delta = \beta \Delta$, $k = k + 1$ and go to step~1
\item[]$15$\hspace{15pt} Else if $k \ne l^j + Z$, then compute $f(x^k)$
\item[]$16$\hspace{30pt} If $f(x^k) \ge f^j_R$
\item[]$17$\hspace{45pt} Backtrack to $x^{l^j}$, set $d^k = d^{l^j}$, $k = l^j$ and go to step~22
\item[]$18$\hspace{30pt} Else
\item[]$19$\hspace{45pt} Set $j = j + 1$, $l^j = k$, $f^j = f(x^k)$ and $\displaystyle{f_R^j = \max_{0 \le i \le \min\{j,M\}} f^{j-i}}$
\item[]$20$\hspace{30pt} End if
\item[]$21$\hspace{15pt} End if
\vspace{5pt}
\item[]$\,\,\,\,\,\,$\hspace{15pt} \textbf{Non-monotone Armijo line search}
\item[]$22$\hspace{15pt} Set $\alpha^k = (\delta)^{\nu}$, where $\nu$ is the smallest non-negative integer such that
                         \begin{equation*}
                         f([x^k+\delta^{\nu} d^k]^{\sharp}) \le  f^j_R + \gamma (\delta)^{\nu} \nabla f(x^k)^\top d^k
                         \end{equation*}
\item[]$23$\hspace{15pt} Set $x^{k+1} = [x^k+\alpha^k d^k]^{\sharp}$ and $k = k + 1$
\vspace{5pt}
\item[]$24$ End while
\vspace{5pt}
\end{algorithmic}
\end{algorithm}

We see that every iteration $k$ starts with a non-stationary point $x^k$.
We first compute the active and non-active sets estimates as in step~2 and, if necessary, we evaluate the objective function (steps 3--9).
More in detail, thanks to a non-monotone stabilization strategy, inspired from that used in~\cite{grippo1986nonmonotone},
we can compute $f(x^k)$ only once every $Z$ iterations (if some tests described below are satisfied),
rather than at each iteration, allowing us to save computational time.
When we perform this function control, we compare $f(x^k)$ with a reference value $f^j_R$, which is the maximum among the last $M$ function evaluations:
if $f(x^k) \ge f^j_R$ we backtrack to the best point computed so far (i.e., $x^{l^j}$) and start a line search, otherwise we update $f^j_R$ and go on.

At steps~10--12, we choose a non-empty working set $W^k \subseteq N^k$ by a Gauss-Southwell-type (or greedy) rule: $W^k$ must contain
the index $\hat \imath^k$ of the variable that most violates stationarity. We then compute a search direction $d^k$ such that
\begin{equation}\label{d_W}
d^k_{W^k} = -\frac 1{\mu^k} \nabla_{W^k} f(x^k),
\end{equation}
with $\mu^k > 0$, and all the other components of $d^k$ are equal to zero.
We see that, at each iteration,  only a subset of variables (i.e., those in $W^k$) can be moved and
the search direction is computed in the variable subspace defined by $W^k$.
The computation of the coefficient $\mu^k$ will be described in details later on.

Afterwards, in steps~13--21, if $\norm{[x^k + d^k]^{\sharp}}$ if sufficiently small we set $x^{k+1}=[x^k + d^k]^{\sharp}$ and terminate the iteration.
It means that, if that test is satisfied, we accept the unit stepsize without computing the objective function.
Otherwise, we have two possibilities: if $f(x^k) \ge f^j_R$ we backtrack to $x^{l^j}$, else we update $f^j_R$ and go on.
Both these cases are followed by a non-monotone line search.

In the last steps~22--23 of the algorithm, we compute the stepsize $\alpha^k$ in order to set $x^{k+1} = [x^k+\alpha^k d^k]^{\sharp}$.
We use a non-monotone Armijo line search with reference value equal to $f^j_R$ (which was first proposed in~\cite{grippo1986nonmonotone}).

Now, let us describe how we compute the search direction $d^k$ (step~12).
As mentioned before, we choose a coefficient $\mu^k$ (in a finite positive interval) and compute $d^k_{W^k}$ as in~\eqref{d_W},
while all the other components of $d^k$ are set to zero. In this way, at each iteration $k$ we move only the variables in $W^k$.
In our experiments, we computed $d^k_{W^k}$ as a spectral (or Barzilai-Borwein) gradient direction
(see, e.g.,~\cite{fletcher2005barzilai} and the references therein).

More precisely, based on the strategy proposed in~\cite{andreani2010second,birgin2002large}, for $k < 2$ we set
$$\displaystyle{\mu^k = \max\Biggl\{\mu_\text{min}, \, \min\biggl\{1, \, \dfrac{\norm{x^k_{W^k}}}{\norm{\nabla_{W^k} f(x^k)}}\biggr\}\Biggr\}}$$
and, for $k \ge 2$,
\begin{equation*}
\mu^k =
\begin{cases}
\max\{\mu_\text{min}, \, \mu^k_a\}, \quad & \text{if } 0 < \mu^k_a < \mu_\text{max}, \\[1.1ex]
\max\bigl\{\mu_\text{min}, \, \min\{\mu_\text{max}, \, \mu^k_b\}\bigr\}, \quad & \text{if } \mu^k_a \ge \mu_\text{max}, \\[1.1ex]
\max\Biggl\{\mu_\text{min}, \, \min\biggl\{1, \, \dfrac{\norm{x^k_{W^k}}}{\norm{\nabla_{W^k} f(x^k)}}\biggr\}\Biggr\}, \quad & \text{if } \mu^k_a \le 0,
\end{cases}
\end{equation*}
with $0 < \mu_\text{min} \le \mu_\text{max} < \infty$,
$\mu^k_a = \dfrac{(s^{k-1})^\top y^{k-1}}{\norm{s^{k-1}}^2}$, $\mu^k_b = \dfrac{\norm{y^{k-1}}^2}{(s^{k-1})^\top y^{k-1}}$,
$s^{k-1} = x^k_{W^k}-x^{k-1}_{W^k}$ and $y^{k-1} = \nabla_{W^k} f(x^k) - \nabla_{W^k} f(x^{k-1})$.
\subsection{Global convergence proof}
In order to prove the global convergence of \FASTATVO{} we need some preliminary results that we derive below. Throughout the whole section, we use the notation of Algorithm \ref{alg:FAST_ATVO}.

First, we report a known result of non-monotone methods, whose proof can be easily adapted from the proof of Lemma~3 of~\cite{grippo1991class}
and is omitted here for the sake of brevity. 
\begin{lemma}\label{lemma:nonmonotone}
Assume that $\{x^k\}$ is an infinite sequence of points produced by \FASTATVO. Then,
\begin{gather}
\label{conv_f_R} \lim_{k \to \infty} f(x^{k}) = \lim_{j \to \infty} f^j_R = \bar f_R \in \R, \\
\label{conv_x} \lim_{k \to \infty} \norm{x^{k+1} - x^k} = 0.
\end{gather}
\end{lemma}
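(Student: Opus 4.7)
The plan is to adapt the classical non-monotone line search analysis of Grippo, Lampariello and Lucidi, with extra care for the backtracking and the ``every $Z$ iterations'' function check of \FASTATVO. I would first focus on the subsequence $\{x^{l^j}\}$ of iterates at which the function is actually evaluated and at which the reference value $f^j_R = \max_{0\le i\le \min\{j,M\}} f^{j-i}$ is updated. As soon as $j\ge M$ the window is full, and the non-monotone Armijo-type inequality enforced at steps~4--7, 16--19 and~22 gives $f^{j+1}\le f^j_R+\gamma\alpha^j\nabla f(x^{l^j})^\top d^{l^j}\le f^j_R$, since $d^{l^j}$ is a descent direction by \eqref{d_W} and $\mu^{l^j}>0$. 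Adding $f^{j+1}$ while dropping $f^{j-M}$ cannot increase the max, so $\{f^j_R\}_{j\ge M}$ is non-increasing; it is bounded below because $f=-TV_Q^p$ is continuous on the compact box $[l,u]$. Hence $f^j_R\to\bar f_R\in\R$.

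Next I would deduce $f(x^k)\to\bar f_R$. Each $f^{j+1}$ belongs to every subsequent window of length $M+1$, so $\liminf_j f^j\ge \bar f_R$, while $f^{j+1}\le f^j_R$ gives $\limsup_j f^j\le\bar f_R$, hence $f^j\to\bar f_R$. For indices $k$ between two consecutive checkpoints $l^j$ and $l^{j+1}$, the algorithm either accepts the unit stepsize (without re-evaluating $f$) or backtracks to the best-so-far $x^{l^j}$ via steps~5 and~17; in either case $f(x^k)$ is sandwiched between adjacent reference values, and the squeeze yields \eqref{conv_f_R}.

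For \eqref{conv_x} I would rewrite the accepted Armijo step as $-\gamma\alpha^j\nabla f(x^{l^j})^\top d^{l^j}\le f^j_R-f^{j+1}$, sum over $j$, and observe that the right-hand side telescopes against the convergent sequence $\{f^j_R\}$ up to a bounded $M$-shift, so $\sum_j\alpha^j\lvert\nabla f(x^{l^j})^\top d^{l^j}\rvert<\infty$. Since $d^{l^j}_{W^{l^j}}=-\tfrac{1}{\mu^{l^j}}\nabla_{W^{l^j}}f(x^{l^j})$ with $\mu^{l^j}\in[\mu_{\min},\mu_{\max}]$, this forces $\alpha^j\norm{d^{l^j}}^2\to 0$, and non-expansivity of the projection $[\cdot]^\sharp$ yields $\norm{x^{l^j+1}-x^{l^j}}\to 0$. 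For the intermediate iterates, every step that accepts the unit stepsize at step~14 multiplies $\Delta$ by $\beta\in(0,1)$ and satisfies the control imposed at step~13, so those displacements form a geometrically vanishing contribution; the remaining intermediate steps are Armijo steps covered by the argument above. Piecing the two cases together along the entire sequence gives $\norm{x^{k+1}-x^k}\to 0$.

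The main obstacle I anticipate is the bookkeeping around the backtracking resets $k\leftarrow l^j$ at steps~5 and~17: the notation $\{x^k\}$ could in principle revisit the same iterate, so before combining the ingredients above I would fix a convention that indexes the sequence only by ``effective'' moves (i.e., after a backtrack we continue with the new counter value), so that consecutive terms in $\{x^k\}$ are genuinely produced by either a unit-stepsize step or a successful line search. Once this is pinned down, the two convergence claims follow from the telescoping and geometric-decay arguments outlined above.
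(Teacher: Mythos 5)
Your overall plan coincides with the paper's: the paper omits this proof entirely and defers to Lemma~3 of \cite{grippo1991class}, and your sketch is an adaptation of exactly that GLL-type analysis (monotone convergence of the reference values $f^j_R$, geometric $\Delta$-control of the unit steps, Armijo control of the line-search steps). The skeleton is right, and you are also right to flag the reindexing caused by the backtracking resets at steps~5 and~17. However, two of your intermediate steps do not hold as written. First, the justification of $\liminf_j f^j\ge\bar f_R$ is backwards: the fact that $f^{j+1}$ belongs to the subsequent windows only gives $f^{j+1}\le f^{j'}_R$ for $j'=j+1,\dots,j+1+M$, i.e.\ yet another \emph{upper} bound; membership in the window never bounds $f^{j+1}$ from below. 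Similarly, the intermediate iterates between checkpoints are not ``sandwiched between adjacent reference values'' --- their function values are never tested against anything (that is the whole point of evaluating $f$ only every $Z$ iterations), and they may well exceed $f^j_R$, which is precisely why the backtrack at steps~4--5 exists. Second, the summability claim $\sum_j\alpha^j\lvert\nabla f(x^{l^j})^\top d^{l^j}\rvert<\infty$ does not follow from ``telescoping'': the series $\sum_j\bigl(f^j_R-f^{j+1}\bigr)$ does not telescope because $f^{j+1}\ne f^{j+1}_R$ in general, and convergence of the monotone sequence $\{f^j_R\}$ says nothing about summability of $f^j_R-f^{j+1}$ (a sequence satisfying only $f^{j+1}<f^j_R$ can even have $f^j_R-f^{j+1}\not\to0$).

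The missing ingredient is the standard GLL backward induction, which repairs both points at once. Let $\sigma(j)$ attain the maximum, $f^j_R=f^{\sigma(j)}$ with $j-M\le\sigma(j)\le j$, so that $f^{\sigma(j)}\to\bar f_R$ along that subsequence; one then proves by induction on the offset $i$ that $f^{\sigma(j)-i}\to\bar f_R$ and that the displacements between consecutive checkpoints at offset $i$ vanish, using at each stage the term-by-term bound $0\le-\gamma\alpha^k\nabla f(x^k)^\top d^k\le f^{j'}_R-f^{j'+1}\to0$ (no summability needed) together with $\lvert\nabla f(x^k)^\top d^k\rvert\ge\mu_{\min}\lVert d^k\rVert^2$, the geometric decay of $\Delta$ for the unit steps in between, and uniform continuity of $f$ on the compact box. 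Since every index lies within a bounded offset of some $\sigma(j)$, this yields \eqref{conv_f_R} and \eqref{conv_x} for the full sequence. A minor further caveat: step~13 as printed tests $\norm{[x^k+d^k]^\sharp}\le\Delta$, the norm of the candidate point rather than of the displacement; your geometric-decay argument implicitly reads it as a displacement test, which is the intended GLL mechanism, but you should state that reading explicitly.
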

Further, we state the following technical lemma
\begin{lemma}\label{lem:subsequences}
There exist subsequences $\{\alpha^k\}_K$, $\{x^k\}_K$, $\{d^k\}_K$ and sets $\bar A_l$, $\bar A_u$, $\bar N$, $\bar W$ such that
\begin{gather*}
\lim_{k \to\infty, \, k \in K} \alpha^k = \bar \alpha \in \R, \\
\lim_{k \to\infty, \, k \in K} x^k = \bar x \in \R^n, \\
\lim_{k \to\infty, \, k \in K} d^k = \bar d \in \R^n, \\
A_l^k = \bar A_l, \quad A_u^k = \bar A_u, \quad N^k = \bar N, \quad W^k = \bar W, \quad \forall k \in K,
\end{gather*}
with $\bar \alpha >0$ and $\norm{\bar d} > 0$.
\end{lemma}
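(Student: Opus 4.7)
The plan is to build the subsequence $K$ in stages, using compactness for the continuous quantities and a pigeonhole argument for the discrete ones, and then to verify the two strict lower bounds $\|\bar d\|>0$ and $\bar\alpha>0$ separately, the second being the real obstacle.

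Since $[l,u]$ is compact, $\{x^k\}$ admits a convergent subsequence. The convergence theorem is applied by contradiction, so the setting in which this lemma is invoked is that some subsequence $\{x^k\}_{K_0}$ converges to a point $\bar x$ that fails to be stationary for~\eqref{prob}. The index sets $A_l^k,A_u^k,N^k,W^k$ lie in the finite power set of $\{1,\dots,n\}$, and a pigeonhole argument yields an infinite $K_1\subseteq K_0$ on which all four are constant; this defines $\bar A_l,\bar A_u,\bar N,\bar W$. Along $K_1$, $d^k$ is supported on $\bar W$ and bounded, because $\|d^k\|=\|\nabla_{\bar W}f(x^k)\|/\mu^k$ with $\nabla f$ continuous on the compact box $[l,u]$ and $\mu^k\ge\mu_\text{min}>0$; the stepsize $\alpha^k$ lies in $(0,1]$. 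Two further applications of Bolzano--Weierstrass then give an infinite $K\subseteq K_1$ on which $\alpha^k\to\bar\alpha$ and $d^k\to\bar d$.

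For $\|\bar d\|>0$ I use the greedy rule of step~10 in Algorithm~\ref{alg:FAST_ATVO}. Componentwise nonexpansivity of the (separable) projection onto $[l,u]$ applied at the feasible point $x^k$ gives $|x^k_i-[x^k-\nabla f(x^k)]^\sharp_i|\le |\nabla_i f(x^k)|$ for every $i$, while the non-stationarity of $\bar x$ and the continuity of $\nabla f$ ensure $\|x^k-[x^k-\nabla f(x^k)]^\sharp\|\ge\eta$ for some $\eta>0$ and all sufficiently large $k\in K$. Since $\hat\imath^k$ is the index that maximises the componentwise violation over $N^k$ and the nonzero violations are supported on $N^k$, we obtain $|\nabla_{\hat\imath^k} f(x^k)|\ge\eta/\sqrt n$, and because $\hat\imath^k\in W^k=\bar W$ the direction satisfies $|d^k_{\hat\imath^k}|=|\nabla_{\hat\imath^k}f(x^k)|/\mu^k\ge \eta/(\mu_\text{max}\sqrt n)$; this bound passes to the limit.

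The main obstacle is $\bar\alpha>0$. If the unit-step branch (steps~13--14) is chosen infinitely often along $K$, then after a further extraction $\bar\alpha=1$ and we are done, so assume that for all large $k\in K$ the algorithm enters the non-monotone Armijo line search of step~22. Supposing by contradiction that $\alpha^k\to 0$, for $k$ large the preceding trial $\alpha^k/\delta$ must fail the Armijo test, which, using the reference-value bound $f^j_R\ge f(x^k)$ maintained by the function controls in steps~3--9 and~15--21, yields
$$
\frac{f\!\left([x^k+(\alpha^k/\delta)d^k]^\sharp\right)-f(x^k)}{\alpha^k/\delta}>\gamma\,\nabla f(x^k)^\top d^k.
$$
For each nonzero coordinate of $d^k$, the definitions of $A_l^k$ and $A_u^k$ ensure that the coordinate either lies in the open interval $(l_i,u_i)$ or sits at a bound from which $d^k_i$ points strictly inward; hence for $k$ large the projection above is inactive. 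A first-order expansion of $f\in C^1$ then passes the inequality to the limit as $(1-\gamma)\,\nabla f(\bar x)^\top\bar d\ge 0$, contradicting the strict upper bound
$$
\nabla f(\bar x)^\top\bar d=\lim_{k\in K}\Big(-\frac{\|\nabla_{\bar W}f(x^k)\|^2}{\mu^k}\Big)\le -\frac{\eta^2}{n\,\mu_\text{max}}<0
$$
already established via the Gauss--Southwell step. Therefore $\bar\alpha>0$ and the plan is complete.
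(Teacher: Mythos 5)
Your subsequence-extraction step (pigeonhole on the finitely many possible index sets, Bolzano--Weierstrass for $\{x^k\}$, $\{\alpha^k\}$ and $\{d^k\}$, with boundedness of $\{d^k\}$ coming from $\mu^k\ge\mu_\text{min}$ and the continuity of $\nabla f$ on the compact box) is exactly the paper's argument --- and in the paper that is the \emph{whole} proof: the properties $\bar\alpha>0$ and $\norm{\bar d}>0$ are not derived intrinsically but are imported from the contradiction hypotheses at the points where the lemma is invoked (in Lemma~\ref{lemma:alphad_to_zero} the assumption $\alpha^k\norm{d^k}\not\to 0$ yields a subsequence with $\alpha^k\norm{d^k}\ge\epsilon$, hence both positivity claims; in Proposition~\ref{prop:lim_dir_der} the assumption $\nabla f(x^k)^\top d^k\not\to 0$ yields $\norm{\bar d}>0$, and there the paper in fact goes on to prove $\alpha^k\to 0$ on $K$, so $\bar\alpha>0$ cannot be an unconditional property of the algorithm). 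You instead read the lemma as being invoked under non-stationarity of the limit point $\bar x$ and try to prove both positivity claims from that. Your derivation of $\norm{\bar d}>0$ via the Gauss--Southwell index $\hat\imath^k$ and the componentwise nonexpansivity of the box projection is correct and self-contained (the paper only obtains the analogous fact later, inside case (iii) of the proof of Theorem~\ref{thm:conv}).

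The gap is in your proof of $\bar\alpha>0$, at the sentence claiming that ``for $k$ large the projection above is inactive.'' Membership $i\in N^k$ constrains the sign of $\nabla_i f(x^k)$ only when $x^k_i$ sits exactly at a bound; it is entirely possible that $l_i<x^k_i<u_i$ with $x^k_i\to l_i$ while $d^k_i<0$, in which case $x^k_i+(\alpha^k/\delta)d^k_i<l_i$ for infinitely many $k$ even though $\alpha^k\to 0$ (take $x^k_i-l_i$ of order $1/k$ and $\alpha^k$ of order $1/\sqrt{k}$). The trial point is then genuinely clipped, your first-order expansion does not apply to the unprojected point, and the inequality does not pass to the limit as written. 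This is precisely the difficulty that the paper's proof of Proposition~\ref{prop:lim_dir_der} is built to overcome: it writes $[x^k+(\alpha^k/\delta)d^k]^\sharp=x^k+(\alpha^k/\delta)d^k-y^k$ and spends the bulk of the proof establishing $\liminf_{k\in K}(\delta/\alpha^k)\nabla f(z^k)^\top y^k\ge 0$ by a case analysis on the clipped coordinates. Without that analysis --- or without simply taking the positivity claims from the contradiction hypotheses in the actual invocation contexts, as the paper does --- your proof of $\bar\alpha>0$ is incomplete.
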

\begin{proof}
The thesis is an immediate consequence of the fact that $A^k_l$, $A^k_u$, $N^k$, $W^k$ are subsets of a finite set of indices and the sequences $\{\alpha^k\}$, $\{x^k\}$, $\{d^k\}$ are bounded. In particular, the latter property itself follows from the fact that $0 \le \alpha^k \le 1$,
from the compactness of the feasible set and from the compactness of the feasible set combined with the definition of $d^k$ and the continuity of $\nabla f$, respectively.
\end{proof}
Using the previous lemmas, we derive the following result.
\begin{lemma}\label{lemma:alphad_to_zero}
Assume that $\{x^k\}$ is an infinite sequence of points produced by \FASTATVO. Then,
\begin{equation*}
\lim_{k \to \infty} \alpha^k \norm{d^k} = 0.
\end{equation*}
\end{lemma}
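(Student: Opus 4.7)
I would argue by contradiction: assume $\alpha^k\|d^k\|$ does not converge to $0$, so that there exist $\varepsilon>0$ and an infinite set of indices $K$ with $\alpha^k\|d^k\|\ge\varepsilon$ for every $k\in K$. The sequences $\{\alpha^k\}$, $\{x^k\}$ and $\{d^k\}$ are uniformly bounded: $\{\alpha^k\}\subseteq[0,1]$ by construction; $\{x^k\}$ is contained in the compact box $[l,u]$; and $\|d^k\|=\tfrac{1}{\mu^k}\|\nabla_{W^k}f(x^k)\|$ is bounded because $\mu^k\ge\mu_{\min}>0$ and $\nabla f$ is continuous on $[l,u]$. Since each of $A_l^k,A_u^k,N^k,W^k$ belongs to a finite collection, Lemma~\ref{lem:subsequences} refines $K$ into a subsequence (still denoted $K$) along which $\alpha^k\to\bar\alpha$, $x^k\to\bar x$, $d^k\to\bar d$, $\mu^k\to\bar\mu$ and all four index sets are stationary at $\bar A_l,\bar A_u,\bar N,\bar W$; the bound $\alpha^k\|d^k\|\ge\varepsilon$ then forces $\bar\alpha>0$ and $\|\bar d\|>0$.

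Next I would pigeonhole on how $x^{k+1}$ is produced from $x^k$ along this subsequence: either the non-monotone Armijo line search (step~22 of Algorithm~\ref{alg:FAST_ATVO}) is invoked for infinitely many $k\in K$, or the unit-step branch (step~14) is taken instead. The \emph{core case} is the Armijo one. After a further refinement of $K$, the acceptance condition gives
\begin{equation*}
f(x^{k+1})\le f_R^{j(k)}+\gamma\,\alpha^k\,\nabla f(x^k)^{\top}d^k .
\end{equation*}
Because $d^k_i=0$ for $i\notin W^k$ and $d^k_{W^k}=-\tfrac{1}{\mu^k}\nabla_{W^k}f(x^k)$, a direct computation yields $\nabla f(x^k)^{\top}d^k=-\mu^k\|d^k\|^2$, so that
\begin{equation*}
f(x^{k+1})\le f_R^{j(k)}-\gamma\mu_{\min}\alpha^k\|d^k\|^2 .
\end{equation*}
By Lemma~\ref{lemma:nonmonotone} both $f(x^{k+1})$ and $f_R^{j(k)}$ converge to the same limit $\bar f_R$; passing to the limit along $K$ gives $0\le -\gamma\mu_{\min}\bar\alpha\|\bar d\|^2$, contradicting $\bar\alpha,\|\bar d\|>0$.

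The \textbf{main obstacle} is the complementary case, in which the unit-step branch (step~14) is taken for infinitely many $k\in K$. There $\alpha^k=1$, so the hypothesis reduces to $\|d^k\|\ge\varepsilon$; on the other hand, $\Delta$ is multiplied by $\beta\in(0,1)$ at every such iteration, so $\Delta_k\to 0$, and the gate at step~13 forces the effective step to vanish along $K$. The plan is to exploit the sign information encoded by $\bar W\subseteq\bar N$: for every index $i$ at which $x^k_i$ lies on a bound, the definition of $N(x^k)$ fixes the sign of $\nabla_i f(x^k)$ and hence of $d^k_i$, so $d^k$ is a feasible direction at $x^k$; taking limits and using the componentwise continuity of the projection $[\,\cdot\,]^{\sharp}$ then forces $\bar d=0$, again contradicting $\|\bar d\|>0$. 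Combined, the two cases rule out the initial assumption and establish $\lim_{k\to\infty}\alpha^k\|d^k\|=0$.
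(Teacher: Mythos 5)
Your proposal is correct, but it is organized differently from the paper's proof. The paper makes no distinction between the unit-step branch (step~14) and the Armijo branch (steps~22--23): Lemma~\ref{lemma:nonmonotone} already yields $\|x^{k+1}-x^k\|\to 0$ for the \emph{whole} sequence, whichever mechanism produced each iterate, so after extracting the subsequence of Lemma~\ref{lem:subsequences} one immediately gets $[\bar x+\bar\alpha\bar d]^{\sharp}=\bar x$ componentwise, and the sign analysis (using $\bar W\subseteq\bar N$ together with the active-set estimates to conclude $\bar d_i\ge 0$ at lower bounds and $\bar d_i\le 0$ at upper bounds, while $\bar d_i=0$ off $\bar W$) forces $\bar d=0$, contradicting $\|\bar d\|>0$. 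In other words, what you flag as the ``main obstacle'' is handled in the paper by exactly the same projection-plus-sign argument as everything else, and no pigeonholing is needed; the vanishing of the step in the unit-step branch also does not have to be deduced from $\Delta\to 0$, since it is already contained in \eqref{conv_x}. Your treatment of the Armijo branch via the sufficient-decrease inequality $f(x^{k+1})\le f^{j}_R-\gamma\mu_{\min}\alpha^k\|d^k\|^2$ combined with \eqref{conv_f_R} is a valid and more classical alternative for that branch (it delivers a purely analytic contradiction that never touches the geometry of the box, and uses the explicit form $\nabla f(x^k)^\top d^k=-\mu^k\|d^k\|^2$ of the direction). However, since the projection/sign argument remains indispensable for the unit-step branch, the case split does not shorten the proof; it merely replaces one of two otherwise identical sub-arguments with a different, though correct, one.
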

\begin{proof}
We proceed by contradiction and we assume that the result is not true.
Let  $\{\alpha^k\}_K$, $\{x^k\}_K$, $\{d^k\}_K$ and $\bar A_l$, $\bar A_u$, $\bar N$, $\bar W$ be as in Lemma \ref{lem:subsequences}.
Using~\eqref{conv_x} and the continuity of the projection operator, we can thus write
\begin{equation}\label{lim_x}
\lim_{k \to\infty, \, k \in K} (x^{k+1}_i - x^k_i) = \lim_{k \to\infty, \, k \in K} ([\bar x + \bar \alpha \bar d]^{\sharp}_i - \bar x_i) = 0, \quad i = 1,\ldots,n.
\end{equation}
In the sequel, we show that the above limit leads to $\norm{\bar d} = 0$, getting a contradiction.
To this extent, let us first observe that, from the definition of $d^k$ and the continuity of $\nabla f$, we have
\begin{equation}\label{d_W_proof}
\bar d_{\bar W} = - \frac 1 {\bar \mu} \nabla_{\bar W} f(\bar x),
\end{equation}
for some $\bar \mu > 0$.
Moreover, from our estimates~\eqref{active_set_estimates} and the continuity of $\nabla f$, there exists an iteration $\hat k$ such that
\begin{subequations}
\begin{align}
\bar x_i = l_i, \, \nabla_i f(\bar x) > 0 \quad \Rightarrow \quad i \notin \bar N, \; \forall k \ge \hat k, \, k \in K, \label{i_in_A_l} \\
\bar x_i = u_i, \, \nabla_i f(\bar x) < 0 \quad \Rightarrow \quad i \notin \bar N, \; \forall k \ge \hat k, \, k \in K. \label{i_in_A_u}
\end{align}
\end{subequations}
Now, let us consider any index $i \in \{1,\ldots,n\}$. We can distinguish four possible cases.
\begin{enumerate}[label=(\roman*), leftmargin=*]
\item $i \notin \bar W$. From the definition of the search direction, it follows that $\bar d_i = 0$.
\item $i \in \bar W$ such that $l_i < \bar x_i < u_i$. From~\eqref{lim_x} and the fact that $\bar \alpha > 0$, we obtain $\bar d_i = 0$.
\item $i \in \bar W$ such that $\bar x_i = l_i$. Recalling that $\bar W \subseteq \bar N$, from~\eqref{i_in_A_l} and the definition of $d^k$ we have $\nabla_i f(\bar x) \le 0$.
    Using~\eqref{d_W_proof} we obtain $\bar d_i \ge 0$, which, combined with~\eqref{lim_x} and the fact that $\bar \alpha > 0$, implies that $\bar d_i = 0$.
\item $i \in \bar W$ such that $\bar x_i = u_i$. Reasoning as above, we get $\bar d_i = 0$.
\end{enumerate}
We thus obtain $\norm{\bar d} = 0$, leading to a contradiction.
\end{proof}

We can now show that the sequence of directional derivatives $\{\nabla f(x^k)^\top d^k\}$ converges to zero,
which will be crucial to prove the global convergence of the algorithm.
\begin{proposition}\label{prop:lim_dir_der}
Assume that $\{x^k\}$ is an infinite sequence of points produced by \FASTATVO. Then,
\begin{equation}\label{lim_gd}
\lim_{k \to \infty} \nabla f(x^k)^\top d^k = 0.
\end{equation}
\end{proposition}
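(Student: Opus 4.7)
I would prove this by contradiction, leveraging the two lemmas already established in this section. The starting observation is that, by step~12 of Algorithm~\ref{alg:FAST_ATVO},
\[
\nabla f(x^k)^{\top} d^k \;=\; -\frac{1}{\mu^k}\,\|\nabla_{W^k} f(x^k)\|^2 \;\le\; 0
\]
for every $k$. Hence~\eqref{lim_gd} is equivalent to ruling out any subsequence $K$ and constant $\eta > 0$ with $\nabla f(x^k)^{\top} d^k \le -\eta$ for all $k \in K$. Assuming such a subsequence exists, I would refine it using Lemma~\ref{lem:subsequences} so that $x^k \to \bar x$, $d^k \to \bar d$, $\alpha^k \to \bar\alpha$ and the sets $A_l^k$, $A_u^k$, $N^k$, $W^k$ are constant along $K$. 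Continuity of $\nabla f$ yields $\nabla f(\bar x)^{\top} \bar d \le -\eta < 0$, so $\bar d \ne 0$; combined with Lemma~\ref{lemma:alphad_to_zero}, which gives $\alpha^k \|d^k\| \to 0$, this forces $\alpha^k \to 0$ along $K$. In particular, for $k \in K$ large, the unit-step branch (step~14), which produces $x^{k+1}$ with effective stepsize $1$, cannot have been taken, so $x^{k+1}$ must come from the non-monotone Armijo line search at step~22.

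I would then apply the classical Armijo trick: since $\alpha^k < 1$ for large $k \in K$, the trial stepsize $\alpha^k/\delta$ failed the acceptance test, giving
\[
f\!\bigl([\,x^k + (\alpha^k/\delta)\,d^k\,]^{\sharp}\bigr) \;>\; f_R^{j(k)} + \gamma\,(\alpha^k/\delta)\,\nabla f(x^k)^{\top} d^k,
\]
where $j(k)$ is the reference index at iteration~$k$. Two structural simplifications are available. First, the definitions in~\eqref{active_set_estimates} imply that for any $i \in W^k$ with $x^k_i = l_i$ one has $\nabla_i f(x^k) \le 0$, and hence $d^k_i \ge 0$; similarly if $x^k_i = u_i$ then $d^k_i \le 0$; combined with $d^k_i = 0$ outside $W^k$, this means that for $\alpha^k/\delta$ small enough the projection $[\,\cdot\,]^{\sharp}$ acts as the identity on $x^k + (\alpha^k/\delta) d^k$. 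Second, the update rule of $f_R^{\,j}$ ensures $f(x^k) \le f_R^{j(k)}$ along a tail of $K$, since $f_R^{\,j}$ is a running maximum that dominates every value produced by a preceding Armijo step. After these reductions, a first-order Taylor expansion of $f$ at $x^k$ and division by $\alpha^k/\delta > 0$ yield
\[
(1-\gamma)\,\nabla f(x^k)^{\top} d^k \;>\; r_k,
\]
with $r_k \to 0$. Passing to the limit along $K$ gives $(1-\gamma)\,\nabla f(\bar x)^{\top}\bar d \ge 0$, hence $\nabla f(\bar x)^{\top}\bar d \ge 0$, contradicting $\nabla f(\bar x)^{\top}\bar d \le -\eta$.

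The main obstacle I anticipate is the careful justification of the inequality $f(x^k) \le f_R^{j(k)}$: because the objective is computed only every $Z$ iterations and the backtracking rules at steps~5 and~17 may rewind the iteration counter, one must use Lemma~\ref{lemma:nonmonotone} together with the update rule of $f_R^{\,j}$ to argue that along a tail of the contradiction subsequence $K$ no such rewinding occurs, so that $x^k$ is indeed produced by an Armijo-accepted step from the previous iteration. Once this bookkeeping is settled, the rest is a standard projected, non-monotone Armijo argument built around the active-set structure of $d^k$ given in~\eqref{active_set_estimates} and the facts $\alpha^k\|d^k\|\to 0$ and $f(x^k)-f_R^{j(k)}\to 0$.
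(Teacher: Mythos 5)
Your overall strategy matches the paper's: argue by contradiction, extract a subsequence via Lemma~\ref{lem:subsequences} with $\nabla f(\bar x)^\top\bar d=-\eta<0$, deduce $\alpha^k\to 0$ from Lemma~\ref{lemma:alphad_to_zero}, exploit the failed Armijo trial at stepsize $\alpha^k/\delta$, expand to first order, and pass to the limit to contradict $\gamma\in(0,1)$. The bookkeeping point you flag as the main obstacle, namely $f(x^k)\le f_R^{j(k)}$ when the line search is invoked, is true and is used in the paper, but it is the easy part: whenever step~22 is reached, either the algorithm has just backtracked to $x^{l^j}$, so $f(x^k)=f^j\le f_R^j$ by the definition of $f_R^j$ as a running maximum, or it has just set $f^j=f(x^k)$ and $f_R^j=\max_{0\le i\le\min\{j,M\}}f^{j-i}\ge f^j$.

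The genuine gap is your claim that for $\alpha^k/\delta$ small enough the projection $[\cdot]^{\sharp}$ acts as the identity on $x^k+(\alpha^k/\delta)d^k$. Your sign argument only covers indices $i\in W^k$ that sit exactly on a bound, where \eqref{Al}--\eqref{N} indeed force $d^k_i$ to point inward. It does not cover indices $i\in\bar W$ with $l_i<x^k_i<u_i$ but $\bar x_i=l_i$ (or $u_i$) and $d^k_i$ pointing toward the bound: there $x^k_i-l_i\to 0$ and $(\alpha^k/\delta)\abs{d^k_i}\to 0$ simultaneously, and nothing controls their relative rates, so the trial point may leave the box for every $k\in K$ no matter how small $\alpha^k$ becomes. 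Consequently the residual $y^k=x^k+(\alpha^k/\delta)d^k-[x^k+(\alpha^k/\delta)d^k]^{\sharp}$ need not vanish fast enough, and after the mean-value expansion your remainder $r_k$ contains the term $(\delta/\alpha^k)\,\nabla f(z^k)^\top y^k$, a ratio of two quantities that both tend to zero and is not obviously negligible. The paper's proof spends most of its length on exactly this point, establishing $\liminf_{k\in K}(\delta/\alpha^k)\nabla f(z^k)^\top y^k\ge 0$ (inequality \eqref{liminf_major}) via a case analysis that pairs the sign of $y^k_i$ with the sign of $\nabla_i f(\bar x)$ forced by the active-set estimate. As written, your ``$r_k\to 0$'' is unjustified; you must either prove this liminf or otherwise control the projection residual before the limit passage yields the contradiction.
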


\begin{proof}
By contradiction, we assume that~\eqref{lim_gd} does not hold.
There must exist $\{\alpha^k\}_K$, $\{x^k\}_K$, $\{d^k\}_K$ and $\bar A_l$, $\bar A_u$, $\bar N$, $\bar W$ defined as in Lemma \ref{lem:subsequences},
which also satisfy
\begin{equation}\label{lim_gd_contr}
\lim_{k \to\infty, \, k \in K} \nabla f(\bar x)^\top \bar d = -\eta < 0.
\end{equation}
Combining~\eqref{lim_gd_contr} with Lemma~\ref{lemma:alphad_to_zero}, we obtain
\begin{equation}\label{alphatozero}
\lim_{k\to\infty,\, k\in K} \alpha^k = 0.
\end{equation}
Therefore, there exist two further infinite subsequences, that, with a slight abuse of notation, we still denote by $\{x^k\}_K$ and $\{d^k\}_K$,
such that $\alpha^k < 1$ for all $k \in K$. From Algorithm \ref{alg:FAST_ATVO}, $\alpha^k$ can be less than $1$ if and only if
it is computed by the non-monotone Armijo line search using $\nu \ge 1$ (see steps~22--23). It follows that
\begin{equation}\label{arm_not_satisf1}
\begin{split}
f\bigl(\bigl[x^k + \frac{\alpha^k}{\delta}d^k\bigr]^{\sharp}\bigr) & > f^{q(k)}_R + \gamma \frac{\alpha^k}{\delta} \nabla f(x^k)^Td^k \\
& \ge f(x^k) + \gamma \frac{\alpha^k}{\delta} \nabla f(x^k)^Td^k, \quad \forall k \in K,
\end{split}
\end{equation}
where $q(k) = \max \{ j \colon l^j \le k \}$.
Let us write the point $[x^k + \frac{\alpha^k}{\delta}d^k]^{\sharp}$ as follows:
\begin{equation}\label{x_armijo_rej}
\bigl[x^k + \frac{\alpha^k}{\delta}d^k\bigl]^{\sharp} = x^k + \frac{\alpha^k}{\delta}d^k - y^k,
\end{equation}
where
\begin{equation}\label{y_k}
y^k_i =
\begin{cases}
x^k_i + \dfrac{\alpha^k}{\delta}d^k_i - l_i, \quad & \text{if } x^k_i + \dfrac{\alpha^k}{\delta}d^k_i < l_i \\
x^k_i + \dfrac{\alpha^k}{\delta}d^k_i - u_i, \quad & \text{if } x^k_i + \dfrac{\alpha^k}{\delta}d^k_i > u_i \\
0, \quad                                           & \text{otherwise},
\end{cases}
\end{equation}
or equivalently,
\begin{equation*}
y^k_i = \max\bigl\{0, \bigl(x^k + \frac{\alpha^k}{\delta}d^k\bigr)_i - u_i\bigr\} - \max\bigl\{0,l_i-\bigl(x^k + \frac{\alpha^k}{\delta}d^k\bigr)_i\bigr\}, \quad i=1,\dots,n.
\end{equation*}
Using~\eqref{alphatozero}, the feasibility of points $x^k$ and the fact that $\{d^k\}$ is bounded, we have
\begin{equation}\label{lim_y_k}
\lim_{k\to\infty, \, k\in K} y^k = 0.
\end{equation}
Now, from~\eqref{arm_not_satisf1} and~\eqref{x_armijo_rej} we can write
\begin{equation}\label{arm_not_satisf2}
f\bigl(x^k + \frac{\alpha^k}{\delta}d^k - y^k\bigr) - f(x^k) > \gamma \frac{\alpha^k}{\delta} \nabla f(x^k)^\top d^k, \quad \forall k \in K.
\end{equation}
By the mean value theorem, we also have
\begin{equation}\label{mean_th_point}
f\bigl(x^k + \frac{\alpha^k}{\delta}d^k - y^k\bigr) = f(x^k) + \frac{\alpha^k}{\delta}\nabla f(z^k)^\top d^k - \nabla f(z^k)^\top y^k,
\end{equation}
where $z^k = x^k + \theta^k\bigl(\alpha^k d^k/\delta - y^k\bigr)$ and $\theta^k \in (0,1)$.
Using~\eqref{alphatozero}, \eqref{lim_y_k} and the fact that $\{d^k\}$ is bounded, we obtain
\begin{equation}\label{lim_z}
\lim_{k\to\infty, \, k\in K} z^k = \bar x.
\end{equation}
Moreover, from~\eqref{arm_not_satisf2} and~\eqref{mean_th_point}, we have
\begin{equation}\label{arm_not_satisf3}
\nabla f(z^k)^\top d^k - \frac{\delta}{\alpha^k}\nabla f(z^k)^\top y^k > \gamma \nabla f(x^k)^\top d^k, \quad \forall k \in K.
\end{equation}

In the following, we show that
\begin{equation}\label{liminf_major}
\liminf_{k\to\infty, \, k\in K} \frac{\delta}{\alpha^k}\nabla f(z^k)^\top y^k \ge 0.
\end{equation}
From~\eqref{y_k} and the fact that each $x^k$ is feasible, we first observe that
\begin{equation}\label{y_k_sign}
y^k_i
\begin{cases}
\in [\alpha^k d^k_i/\delta, 0], \quad & \text{if } d^k_i < 0, \\
\in [0, \alpha^k d^k_i/\delta], \quad  & \text{if } d^k_i > 0, \\
= 0, \quad                             & \text{if } d^k_i = 0.
\end{cases}
\end{equation}
Now, we consider any index $i \in \{1,\ldots,n\}$ and we analyze four possible cases, in order to show that~\eqref{liminf_major} holds.
\begin{enumerate}[label=(\roman*), leftmargin=*]
\item $i \in \{1,\ldots,n\} \setminus \bar W$. From the rule used to compute $d^k$, we have $d^k_i = 0$.
    Using~\eqref{y_k_sign} it follows that $y^k_i = 0$ for all $k \in K$, implying that
    \begin{equation}\label{y_zero_0}
    \lim_{k\to\infty, \, k\in K} \frac{\delta}{\alpha^k}\nabla_i f(z^k) y^k_i = 0.
    \end{equation}
\item $i \in \bar W$ such that $l_i < \bar x_i < u_i$. Since $\{x^k\}_K \to \bar x$, for all sufficiently large $k \in K$ we have
    $l_i + \tau \le x^k \le u_i - \tau$ for some $\tau > 0$.
    Using~\eqref{alphatozero} and the fact that $\{d^k\}$ is bounded, for all sufficiently large $k \in K$ we have
    \begin{equation*}
    l_i < x^k_i + \frac{\alpha^k}{\delta}d^k_i < u_i,
    \end{equation*}
    which, combined with~\eqref{y_k}, implies that $y^k_i = 0$ for all sufficiently large $k \in K$.
    Then,
    \begin{equation}\label{y_zero_1}
    \lim_{k\to\infty, \, k\in K} \frac{\delta}{\alpha^k}\nabla_i f(z^k) y^k_i = 0.
    \end{equation}
\item $i \in \bar W$ such that $\bar x_i = l_i$. Since $\{x^k\}_K \to \bar x$ and $\nabla f$ is continuous,
    from our estimates~\eqref{active_set_estimates} (and recalling that $\bar W \subseteq \bar N$) we have
    \begin{equation}\label{g_i_x}
    \nabla_i f(\bar x) \le 0.
    \end{equation}
    Now, we also show that there exists $\hat k \in K$ such that
    \begin{equation}\label{y_nonpos}
    y^k_i \le 0, \quad \forall k \ge \hat k, \, k \in K.
    \end{equation}
    Indeed, since $\{x^k\}_K \to \bar x$, for all sufficiently large $k \in K$ we have
    $x^k \le u_i - \tau$ for some $\tau > 0$.
    Using~\eqref{alphatozero} and the fact that $\{d^k\}$ is bounded, for all sufficiently large $k \in K$ we have
    \begin{equation*}
    x^k_i + \frac{\alpha^k}{\delta}d^k_i < u_i,
    \end{equation*}
    which, combined with~\eqref{y_k}, implies~\eqref{y_nonpos}.
    Let us partition $K$ into $K_1$ and $K_2$, such that every $k \in K$ belongs to $K_1$ if and only if $d^k_i \ge 0$
    (and then, every $k \in K$ belongs to $K_2$ if and only if $d^k_i < 0$).
    Assuming without loss of generality that both $K_1$ and $K_2$ are infinite, we now analyze the corresponding subsequences.
    \begin{itemize}
    \item For what concerns $K_1$, using~\eqref{y_k_sign} we have that $y^k_i \ge 0$ for all $k \in K_1$.
        From~\eqref{y_nonpos} it follows that $y^k_i = 0$ for all sufficiently large $k \in K_1$. Then,
        \begin{equation}\label{y_zero_2}
        \lim_{k\to\infty, \, k \in K_1} \frac{\delta}{\alpha^k}\nabla_i f(z^k) y^k_i = 0.
        \end{equation}
    \item For what concerns $K_2$, taking into account~\eqref{g_i_x} we distinguish two possible cases.
        \begin{enumerate}[label=(\alph*), leftmargin=*]
        \item $\nabla_i f(\bar x) < 0$.
            Since $\{z^k\}_K \to \bar x$, then $\nabla_i f(z^k) < 0$ for all sufficiently large $k \in K_2$.
            From~\eqref{y_nonpos} it follows that $\nabla f_i(z^k) y^k_i \ge 0$ for sufficiently large $k \in K_2$.
            Then,
            \begin{equation}\label{y_zero_3}
            \liminf_{k\to\infty, \, k \in K_2} \frac{\delta}{\alpha^k}\nabla_i f(z^k) y^k_i \ge 0.
            \end{equation}
        \item $\nabla_i f(\bar x) = 0$. From~\eqref{y_k_sign} we have $|y^k_i| \le \frac{\alpha^k}{\delta}|d^k_i|$.
            Therefore,
            \begin{equation*}
            0 < \frac{\delta}{\alpha^k} |\nabla_i f(z^k)y^k_i| \le \frac{\delta}{\alpha^k} |\nabla_i f_i(z^k)| |y^k_i| \le |\nabla_i f(z^k)| |d^k_i|.
            \end{equation*}
            Since $\{z^k\}_K \to \bar x$ and $\{d^k\}$ is bounded, from the continuity of $\nabla f$ we get
            \begin{equation}\label{y_zero_4}
            \lim_{k\to\infty, \, k \in K_2} \frac{\delta}{\alpha^k} \nabla_i f(z^k)d^k_i = 0.
            \end{equation}
        \end{enumerate}
        From~\eqref{y_zero_2},\eqref{y_zero_3} and~\eqref{y_zero_4} we obtain that, for all $i \in \bar W$ such that $\bar x_i = l_i$,
        \begin{equation}\label{y_zero_5}
        \liminf_{k\to\infty, \, k \in K} \frac{\delta}{\alpha^k}\nabla_i f(z^k) y^k_i \ge 0.
        \end{equation}
    \end{itemize}
\item $i \in \hat N$ such that $\bar x_i = u_i$. Reasoning as in the previous case, we obtain
    \begin{equation}\label{y_zero_6}
    \liminf_{k\to\infty, \, k \in K} \frac{\delta}{\alpha^k}\nabla_i f(z^k) y^k_i \ge 0.
    \end{equation}
\end{enumerate}
Therefore,~\eqref{liminf_major} follows from~\eqref{y_zero_0}, \eqref{y_zero_1}, \eqref{y_zero_5} and~\eqref{y_zero_6}.
Combining~\eqref{liminf_major} with~\eqref{arm_not_satisf3}, we can write
\begin{equation*}
\begin{split}
0 & \le \liminf_{k\to\infty, \, k \in K} \bigl(\nabla f(z^k)^\top d^k - \frac{\delta}{\alpha^k}\nabla f(z^k)^\top y^k - \gamma \nabla f(x^k)^\top d^k\bigr) \\
  & \le \liminf_{k\to\infty, \, k \in K} \bigl(\nabla f(z^k)^\top d^k - \gamma \nabla f(x^k)^\top d^k\bigr) \\
  & = \lim_{k\to\infty, \, k \in K} \bigl(\nabla f(z^k)^\top d^k - \gamma \nabla f(x^k)^\top d^k\bigr) = (1-\gamma) \nabla f(\bar x)^\top \bar d,
\end{split}
\end{equation*}
where the last two equalities follow from the continuity of $\nabla f$ and the fact that both $\{x^k\}_K$ and $\{z^k\}_K$ converge to $\bar x$.
Using~\eqref{lim_gd_contr}, we finally obtain $(\gamma - 1)\eta \ge 0$, with $\eta > 0$ and $\gamma \in (0,1)$, which leads to a contradiction.
\end{proof}

We are finally ready to prove the global convergence of \FASTATVO\ to stationary points.
\begin{proof}[Proof of Theorem~\ref{thm:conv}]
Assume that the sequence $\{x^k\}$ generated by \FASTATVO\ is infinite and let $x^*$ be a limit point of $\{x^k\}$.
Further, let  $\{\alpha^k\}_K$, $\{x^k\}_K$, $\{d^k\}_K$ and $\bar A_l$, $\bar A_u$, $\bar N$, $\bar W$ be as in Lemma \ref{lem:subsequences}.
Using the stationarity conditions~\eqref{stationarity}, we can measure the stationarity violation of any feasible point $x$ by the following functions:
\begin{equation*}
\phi(x_i) = \min \bigl\{ \max\{l_i-x_i, -\nabla_i f(x)\}^2, \max\{x_i-u_i, \nabla_i f(x)\}^2 \bigr\}, \quad i=1,\dots, n.
\end{equation*}
Namely, a feasible point $x$ is stationary if and only if $\phi(x_i) = 0$ for all $i = 1,\ldots,n$.
Now, arguing by contradiction, assume that $x^*$ is non-stationary.
Then, an index $i$ such that $\phi(x^*_i) > 0$ exists.
Note that it must hold that
\begin{equation}\label{violation0}
\nabla_i f(x^*) \ne 0.
\end{equation}
Moreover, from the continuity of $\phi$, there exist $\epsilon \in (0,1)$ and $\hat k \in K$ such that
\begin{equation}\label{violation}
\phi(x^k_i) \ge \epsilon, \quad \forall k \ge \hat k.
\end{equation}
Now, we consider four possible cases.
\begin{enumerate}[label=(\roman*)]
\item $i \in \bar A_l$. From~\eqref{Al} we have $x^k_i = l_i$ and $\nabla_i f(x^k) > 0$ for all $k \in K$.
    Since $\{x^k\}_K \to x^*$, from the continuity of $\nabla f$ it follows that, for all sufficiently large $k \in K$,
    \begin{equation*}
    \nabla_i f( x^k) \ge -\frac{\epsilon}2.
    \end{equation*}
    Then, we have $\phi(x^k_i) \le \epsilon^2/4 < \epsilon$ for all sufficiently large $k \in K$. This contradicts~\eqref{violation}.
\item $i \in \bar A_u$. From~\eqref{Au} we have $x^k_i = u_i$ and $\nabla_i f(x^k) < 0$ for all $k \in K$.
    Then, we obtain a contradiction by the same arguments used above.
\item $i \in \bar W$. Using Proposition~\eqref{prop:lim_dir_der}, we have $\nabla f(x^*)^\top \bar d = 0$.
    From the definition of $d^k$ and the continuity of $\nabla f$, we obtain
    \begin{equation*}
    0=\nabla f(x^*)^\top \bar d = \nabla_{\bar W} f(x^*)^\top \bar d_{\bar W} = -\frac{1}{\bar\mu}\norm{\nabla_{\bar W} f(x^*)}^2 ,
    \end{equation*}
    and then $\nabla_i f(x^*) = 0$, contradicting~\eqref{violation0}.
\item $i \in \bar N \setminus \bar W$. Without loss of generality, we can assume that the index $\hat \imath^k$ computed at step~10 is constant
    and equal to $\hat \imath $ for all $k \in K$ (passing into a subsequence if necessary).
    So, using the definition of $\hat \imath^k$, for all $k \in K$ we can write
    \begin{equation*}
    \abs{x^k_i - [x^k-\nabla f(x^k)]^{\sharp}_i} \le \abs{x^k_{\hat \imath^k} - [x^k-\nabla f(x^k)]^{\sharp}_{\hat \imath^k}}
                                                 = \abs{x^k_{\hat \imath} - [x^k-\nabla f(x^k)]^{\sharp}_{\hat \imath}}.
    \end{equation*}
    Since $x^*_{\hat \imath}$ does not violate stationarity (from the fact that $\hat \imath \in \bar W$, as analyzed above),
    the continuity of the projection operator and the continuity of $\nabla f$ imply that
    \begin{equation*}
    \begin{split}
   0\leq \abs{x^*_i - [x^*-\nabla f(x^*)]^{\sharp}_i} & = \lim_{k \to \infty, \, k \in K} \abs{x^k_i - [x^k-\nabla f(x^k)]^{\sharp}_i} \\
                                                 & \le \lim_{k \to \infty, \, k \in K} \abs{x^k_{\hat \imath} - [x^k-\nabla f(x^k)]^{\sharp}_{\hat \imath}} \\
                                                 & = \abs{x^*_{\hat \imath} - [x^*-\nabla f(x^*)]^{\sharp}_{\hat \imath}} = 0.
    \end{split}
    \end{equation*}
   We hence have $\phi(x^*_i) = 0$, which gives a contradiction.
\end{enumerate}
Therefore $x^*$ must be a stationary point.
\end{proof}

\bibliography{references}

\end{document}